\def\submission{0}
\title{Adaptive versus Static Multi-oracle Algorithms, and Quantum Security of a Split-key PRF\thanks{\textcopyright IACR 2022. Up to a different formatting, this article
		is the final version submitted by the authors to the IACR and to
		Springer-Verlag in Sep 2022. The published version is available from the
		proceedings of  TCC - 2022.}}
    \author[1]{Jelle Don}
    \author[1,2]{Serge Fehr}
    \author[1]{Yu-Hsuan Huang}
    \affil[1]{\small Centrum Wiskunde \& Informatica, The Netherlands}
    \affil[2]{\small Leiden University, The Netherlands}
    \affil[{ }]{\small \tt \{jelle.don, serge.fehr, yhh\}@cwi.nl}
    \date{}
    \newtheorem{theorem}{Theorem}
    \newtheorem{lemma}{Lemma}
    \newtheorem{corollary}{Corollary}
    \author{
        Jelle Don\inst{1}\and
        Serge Fehr\inst{1,2}\and
        Yu-Hsuan Huang\inst{1}
    }
    \institute{
        Centrum Wiskunde \& Informatica (CWI), Amsterdam, The Netherlands\and
        Mathematical Institute, Leiden University, Leiden, The Netherlands\\
        \email{\{jelle.don, serge.fehr, yhh\}@cwi.nl}
    }
\newcommand{\Kcal}{{\cal K}}
\newcommand{\Xcal}{{\cal X}}
\newcommand{\Ycal}{{\cal Y}}
\newcommand{\Wcal}{{\cal W}}
\newcommand{\Ecal}{{\cal E}}
\newcommand{\Ocal}{{\cal O}}
\newcommand{\tr}{{\sf tr}}
\newcommand{\Acal}{{\cal A}}
\newcommand{\Bcal}{{\cal B}}
\newcommand{\Hcal}{{\cal H}}
\DeclareMathAlphabet\mathbfcal{OMS}{cmsy}{b}{n}
\theoremstyle{remark}
\newtheorem{remark}{Remark}
\begin{document}
    \maketitle
    \begin{abstract}
        In the first part of the paper, we show a generic compiler that transforms any oracle algorithm that can query multiple oracles {\em adaptively}, i.e., can decide on {\em which} oracle to query at what point dependent on previous oracle responses, into a {\em static} algorithm that fixes these choices at the beginning of the execution. Compared to naive ways of achieving this, our compiler controls the blow-up in query complexity for each oracle {\em individually}, and causes a very mild blow-up only. 
        
        In the second part of the paper, we use our compiler to show the security of the very efficient hash-based {\em split-key PRF} proposed by Giacon, Heuer and Poettering (PKC~2018), in the {\em quantum} random-oracle model. 
        Using a split-key PRF as the key-derivation function gives rise to a secure KEM combiner. Thus, our result shows that the hash-based construction of Giacon {\em et al.}\ can be safely used in the context of quantum attacks, for instance to combine a well-established but only classically-secure KEM with a candidate KEM that is believed to be quantum-secure. 
        
        Our security proof for the split-key PRF crucially relies on our adaptive-to-static compiler, but we expect our compiler to be useful beyond this particular application. Indeed, we discuss a couple of other, known results from the literature that would have profitted from our compiler, in that these works had to go though serious complications in order to deal with adaptivity. 
    \end{abstract}

\section{Introduction}
    
    This paper offers two main contributions. In a first part, we show a generic reduction from adaptive to static multi-oracle algorithms, with a mild increase of the query complexity {\em for each oracle individually}, and in the second part, exploiting the reduction from the first part, we prove quantum security of the hash--based split-key pseudorandom function (skPRF) proposed in~\cite{giacon2018kem}. We now discuss these two contributions in more detail. 
    
\paragraph{\bf Adaptive versus Static Multi-Oracle Algorithms.}
    
In certain cryptographic security games, the attacker $\Acal$ is an {\em oracle} algorithm that is given query access to {\em multiple} oracles. This is in particular the case when considering the design of a cryptographic scheme in an idealized setting. Consider for instance the security definitions of public-key encryption and signature schemes in the (quantum) random-oracle model, where the attacker is given oracle access to both: the random-oracle and to a decryption/signing oracle. 
    
By default, such an attacker $\Acal$ can then choose {\em adaptively}, i.e., depending on answers to previous queries, at what point to query {\em which oracle}.
This is in contrast to a {\em static} $\Acal$ that has a predefined order of when it queries which oracle.%
 \footnote{In either case, we allow $\Acal$ to decide adaptively {\em what input} to query, when having decided (adaptively or statically) on which oracle to query. }
In certain cases, proving security for a static attacker is easier than proving security for a full fledged adaptive attacker, or taking care of adaptivity (naively) results in an unnecessary blow-up in the error term (see later).

%For instance, the proof could be a hybrid argument, where in a first step the first call to the, say, decryption oracle is answered by a simulator that simulates the decryption oracle without knowing the secret key, and in a second step also the second call to the decryption oracle is answered by the simulator, etc. Furthermore, the closeness of two such hybrids may depend on the number of oracle calls to the random-oracle that occurred prior to the decryption query where the change is taking place. If $\Acal$ is adaptive then this number of prior calls to the random-oracle may actually depend on the randomness of the random-oracle, making the reasoning intricate\,---\,or one needs to use the total number of random-oracle queries as an upper bound in each step, giving rise to a suboptimal result. \serge{After our Slack discussion, I realize now that this is misleading. }

%In the main body of our paper, we discuss a few concrete examples, taken from the literature, of such proofs. In all these cases, achieving security against a general adaptive attacker $\Acal$ caused significant complication, or an unnecessary large security loss, in the original work. \serge{Not necessarily. }

In this light, it seems to be desirable to have a generic compiler that transforms any adaptive attacker $\Acal$ into a static attacker $\bar \Acal$ that is equally successful in the attack. And there is actually a simple, naive solution for that. Indeed, let $\Acal$ be an arbitrary oracle algorithm that makes adaptive queries to $n$ oracles ${\Ocal}_1,\ldots,{\Ocal}_n$, and consider the static oracle algorithm $\bar \Acal$ defined as follows: $\bar \Acal$ simply runs $\Acal$, and at every point in time when $\Acal$ makes a query to one of ${\Ocal}_1,\ldots,{\Ocal}_n$ (but due to the adaptivity it will only become clear at the time of the query {\em which} ${\Ocal}_i$ is to be queried then), the algorithm $\bar \Acal$ makes $n$ queries, one to every ${\Ocal}_i$, and it relays $\Acal$'s query to the right oracle, while making dummy queries to the other oracles. 

At first glance, this simple solution is not too bad. It certainly transforms any adaptive $\Acal$ into a static $\bar \Acal$ that will be equally successful, and the blow-up in the total query complexity is a factor $n$ only, which is mild given that the typical case is $n = 2$. However, it turns out that in many situations, considering the blow-up in the total query complexity is not good enough. 

For example, consider again the case of an attacker against a public-key encryption scheme in the random-oracle model. In this example, it is typically assumed that $\Acal$ may make many more queries to the random-oracle than to the decryption oracle, i.e., $q_H \gg q_D$. But then, applying the above simple compiler, $\bar \Acal$ makes the same number of queries to the random-oracle and to the decryption oracle; namely $\bar q_H = \bar q_D = q_H + q_D$. 
Furthermore, the actual figure of merit, namely the advantage of an attacker $\bar \Acal$, is typically not (bounded by) a function of the total query complexity, but a function of the two respective query complexities $q_H$ and $q_D$ {\em individually}. For example, if one can show that the advantage of any {\em static} attacker $\bar\Acal$ with respective query complexities $\bar q_H$ and $\bar q_D$ is bounded by, say, $\bar q_H \bar q_D^2$, then the above compiler gives a bound on the advantage of any {\em adaptive} attacker $\Acal$ with respective query complexities $q_H$ and $q_D$ of $q_H^3 + 2 q_H^2 q_D + q_H q_D^2$. If $q_H \gg q_D$ then this is significantly worse than $\approx q_H q_D^2$, which one might hope for given the bound for static $\bar\Acal$. 

Our first result is a compiler that transforms any {\em adaptive} oracle algorithm $\Acal$ that makes at most $q_i$ queries to oracle ${\Ocal}_i$ for $i=1,\ldots,n$ into a {\em static} oracle algorithm $\bar \Acal$ that makes at most $\bar q_i = n q_i$ queries to oracle ${\cal O}_i$ for $i=1,\ldots,n$. Thus, rather than controling the blow-up in the total number of queries, we can control the blow-up in the number of queries for each oracle {\em individually}, yet still with the same factor $n$. Our result applies for {\em any} vector ${\bf q} = (q_1,\ldots,q_n) \in \mathbb{N}^n$ and contains no hidden constants. Our compiler naturally depends on $\bf q$ (or, alternatively, needs $\bf q$ as input) but otherwise only requires straight-line black-box access to $\Acal$, and it preserves efficiency: the run time of $\bar \Acal$ is polynomial in $Q = q_1+\cdots+q_n$, plus the time needed to run $\Acal$. Furthermore, the compiler is applicable to any classical or quantum oracle algorithm $\Acal$, where in the latter case the queries to the oracles ${\cal O}_1,\ldots,{\cal O}_n$ may be classical or quantum as well; however, the {\em choice} of the oracle for each query is assumed to be classical (so that individual query complexities are well defined).

In the above made-up example of a public-key encryption scheme with advantage bounded by $\bar q_H \bar q_D^2$ for any static $\bar\Acal$ with respective query complexities $\bar q_H$ and $\bar q_D$, we now get the bound $8 q_H q_D^2$ for any adaptive $\Acal$ with respective query complexities $q_H$ and $q_D$. 

We show the usefulness of our adaptive-to-static compiler by discussing two example results from the literature. 
One is the security proof by Alkim {\em et al.}~\cite{RevqTesla} of the qTESLA signature scheme~\cite{qTesla} in the quantum random-oracle model; the other is the recent work by Alagic, Bai, Katz and Majenz \cite{ABKM21} on the quantum security of the famous Even-Mansour cipher. 
In both these works, the adaptivity of the attacker was a serious obstacle and caused a significant overhead and additional complications in the proof. 
%The complications could have been avoided by means of the  above naive compiler, but that would have caused a significant increase in the security loss, in line as discussed above. 
With our results, these complications could have been avoided without sacrificing much in the security loss (as would be the case with using a naive compiler). 
We also exploit our adaptive-to-static compiler in our second main contribution, discussed below.
%, where we rely on a hybrid security proof that works for static attackers only, and security against arbitrary (adaptive) attackers then follows ``for free''. 

Interestingly, all three example applications are in the realm of quantum security (of a classical scheme). 
This seems to suggest that the kind of adaptivity we consider here is not so much of a hurdle in the case of classical queries.
Indeed, in that case, a typical argument works by inspecting the entire query transcript and identifying an event with the property that conditioned on this event, whatever needs to be shown holds {\em with certainty}, and then it remains to show that this event is very likely to occur. 
In the case of quantum queries, this kind of reasoning does not apply since one cannot ``inspect'' the query transcript anymore; 
instead, one then typically resorts to some sort of hybrid argument where queries are replaced one-by-one, and then adaptivity of the queries may\,---\,and sometimes does, as we discuss\,---\,form a serious obstacle.

\paragraph{\bf Quantum-security of a Split-key PRF.} 

In the upcoming transition to post-quantum secure cryptographic standards, {\em combiners} may play an important role. A combiner can be used compile several crypographic schemes into a new, ``combined'' scheme, which offers the same (or a similar) functionality, and so that the new scheme is secure as long as {\em at least one} of the original schemes is secure. 
For example, combining a well-established but quantum-insecure scheme with a believed-to-be quantum-secure (but less well studied) scheme then offers the best of both worlds: it offers security against quantum attacks, should there really be a quantum computer in the future, but it also offers some protection in case the latter scheme turns out to be insecure (or less secure than expected) even against classical attacks.  In other words, using a combiner in this context ensures that we are not making things less secure by trying to aim for quantum security. 

In~\cite{giacon2018kem}, Giacon, Heuer and Poettering showed that any {\em split-key PRF} (skPRF) gives rise to a secure KEM combiner. In more detail, they show that if a skPRF is used in the (rather) obvious way as a key-derivation function in a KEM combiner, then the resulting combined KEM is IND-CCA secure if at least one of the component KEMs is IND-CCA secure.  
They also suggest a few candidates for skPRFs. The most efficient of the proposed constructions is a hash-based skPRF, which is proven secure in~\cite{giacon2018kem} in the random-oracle model. 
However, in the context of a quantum attack, which is in particular relevant in the above example application of a combiner, it is crucial to prove security in the {\em quantum} random-oracle model~\cite{BonehEtal2011}. 
Here, we close this gap by proving security of the hash-based skPRF construction proposed by Giacon {\em et al.}\ in the quantum random-oracle model. 

Our security proof crucially exploits our adaptive-to-static compiler to reduce a general, adaptive attacker/distinguisher to a static one. 
%Given such a static attacker/distinguisher $\Acal$, we can then define a sequence of hybrids and analyze them to obtain the claimed security. 
Namely, in spirit, our security proof is a typical hybrid proof, where we replace, one by one, the queries to the (sk)PRF by queries to a truly random function; however, the crux is that for each hybrid, corresponding to a particular function query that is to be replaced, the closeness of the current to the previous hybrid depends on the number of hash queries {\em between the current and the previously replaced function query}. In case of an adaptive $\Acal$, {\em each} such ``window'' of hash queries between two function queries could be as large as the total number of hash queries in the worst case, giving rise to a huge multiplicative blow-up when using this naive bound. Instead, for a static $\Acal$, each such window is bounded by a fixed number, with the sum of these numbers being the total number of hash queries. 

By means of our compiler, we can turn the possibly adaptive $\Acal$ into a static one (almost) for free, and this way avoid an unnecessary blow-up, respectively bypass additional complications that arise by trying to avoid this blow-up by other means.

\section{Preliminaries}
    
    We consider oracle algorithms $\Acal^{\Ocal_1,\dots,\Ocal_n}$ that make queries to (possibly unspecified) oracles $\Ocal_1,\dots,\Ocal_n$, see~Fig.~\ref{fig:algs} (left). Sometimes, and in particular when the oracles are not specified, we just write $\Acal$ and leave it implicit that $\Acal$ makes oracle calls. We allow $\Acal$ to be classical or quantum, and in the latter case we may also allow the queries (to some of the oracles) to be quantum; however, the choice of {\em which} oracle is queried is always classical. For the purpose of our work, we may assume $\Acal$ to have no input; any potential input could be hardwired into $\Acal$. For a vector ${\bf q} = (q_1,\ldots,q_n) \in \mathbb{N}^n$, we say that $\Acal$ is a {\em $\bf q$-query} oracle algorithm if it makes at most $q_i$ queries to the oracle $\Ocal_i$. 

In general, such an oracle algorithm $\Acal$ may decide {\em adaptively} which oracle to query at what step, dependent on previous oracle responses. In contrast to this, a {\em static} oracle algorithm has an arbitrary but pre-defined order in querying the oracles. 
% For such a static oracle algorithm, we may write $\Acal^{\Ocal_1,\dots,\Ocal_n}_{i_1,\dots,i_n}$ (or $\Acal_{i_1 \cdots i_Q}$) with $Q \in \mathbb{N}$ and $\{i_1, \ldots, i_Q\} \subseteq \{1,\ldots,n\}$ to declare that the $j$-th query is to the oracle $\Ocal_{i_j}$.  \serge{I think we are actually not using the notation (anymore). }

Our goal will be to transform any {\em adaptive} oracle algorithm $\Acal$ into a {\em static} oracle algorithm $\bar \Acal$ that is functionally equivalent, while keeping the blow-up in query complexity for each individual oracle, i.e., the blow-up for each individual $q_i$, small. By {\em functionally equivalent} (for certain oracle instantiations) we mean the respective executions of $\Acal^{O_1,\dots,O_n}$ and $\bar \Acal^{O_1,\dots,O_n}$ give rise to the same output distribution {\em for all} (the considered) instantiations $O_1,\dots,O_n$ of the oracles $\Ocal_1,\dots,\Ocal_n$. In case of {\em quantum} oracle algorithms, we require the output state to be the same. 

For this purpose, we declare that an {\em interactive oracle algorithm} $\Bcal$ is an interactive algorithm with two distinct interaction interfaces, one for the interaction with $\Acal$ (we call this the {\em simulation interface}), and one for the oracle queries (we call this the {\em oracle interface}), see Fig.~\ref{fig:algs} (middle). For any oracle algorithm $\Acal$, we then denote by $\Bcal[\Acal]$ the oracle algorithm that is obtained by composing $\Acal$ and $\Bcal$ in the obvious way. In other words, $\Bcal[\Acal]$ runs $\Acal$ and answers all of $\Acal$'s oracle queries using its simulation interface; furthermore, $\Bcal[\Acal]$ outputs whatever $\Acal$ outputs at the end of this run of~$\Acal$, see Fig.~\ref{fig:algs} (right).%
\footnote{Note, we silently assume consistency between $\Acal$ and $\Bcal$, i.e.\ $\Acal$ should send a message when $\Bcal$ expects one and the format of these messages should match the format of the messages that $\Bcal$ expects (and vice versa), so that the above composition makes sense. Should $\Bcal$ encounter some inconsistency, it will abort. }

    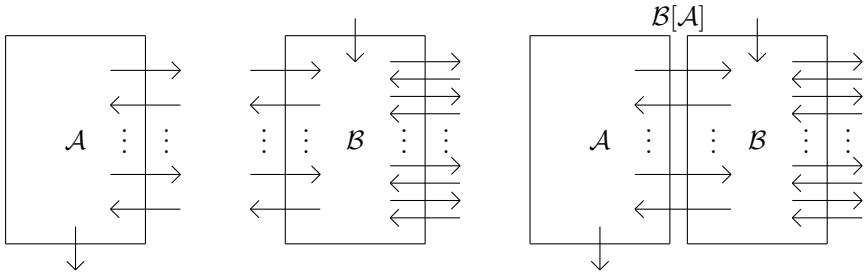
\begin{figure}\quad
        \begin{tikzpicture}[scale=0.46]
        
            \node [] (0) at (-10, 6) {};
            \node [] (1) at (-6, 6) {};
            \node [] (2) at (-10, 0) {};
            \node [] (3) at (-6, 0) {};
            \node [] (4) at (-6, 6) {};
            \node [] (15) at (-8, 3) {};
            \node [] (16) at (-8, 3) {$\Acal$};
            \node [] (24) at (-7, 5) {};
            \node [] (25) at (-5, 5) {};
            \node [] (26) at (-7, 4) {};
            \node [] (27) at (-5, 4) {};
            \node [] (28) at (-7, 2) {};
            \node [] (29) at (-5, 2) {};
            \node [] (30) at (-7, 1) {};
            \node [] (31) at (-5, 1) {};
            \node [] (32) at (-6.6, 3.2) {$\vdots$};
            \node [] (33) at (-5.4, 3.2) {$\vdots$};
            \node [] (34) at (-5.25, 5.25) {};
            \node [] (35) at (-5.25, 4.75) {};
            \node [] (36) at (-6.75, 4.25) {};
            \node [] (37) at (-6.75, 3.75) {};
            \node [] (38) at (-5.25, 2.25) {};
            \node [] (39) at (-5.25, 1.75) {};
            \node [] (40) at (-6.75, 1.25) {};
            \node [] (41) at (-7, 1) {};
            \node [] (42) at (-6.75, 0.75) {};
            \node [] (43) at (-2, 6) {};
            \node [] (44) at (2, 6) {};
            \node [] (45) at (-2, 0) {};
            \node [] (46) at (2, 0) {};
            \node [] (47) at (2, 6) {};
            \node [] (48) at (0, 3) {};
            \node [] (49) at (0, 3) {$\Bcal$};
            \node [] (69) at (-3, 5) {};
            \node [] (70) at (-1, 5) {};
            \node [] (71) at (-3, 4) {};
            \node [] (72) at (-1, 4) {};
            \node [] (73) at (-3, 2) {};
            \node [] (74) at (-1, 2) {};
            \node [] (75) at (-3, 1) {};
            \node [] (76) at (-1, 1) {};
            \node [] (77) at (-2.6, 3.2) {$\vdots$};
            \node [] (78) at (-1.4, 3.2) {$\vdots$};
            \node [] (79) at (-1.25, 5.25) {};
            \node [] (80) at (-1.25, 4.75) {};
            \node [] (81) at (-2.75, 4.25) {};
            \node [] (82) at (-2.75, 3.75) {};
            \node [] (83) at (-1.25, 2.25) {};
            \node [] (84) at (-1.25, 1.75) {};
            \node [] (85) at (-2.75, 1.25) {};
            \node [] (86) at (-3, 1) {};
            \node [] (87) at (-2.75, 0.75) {};
            \node [] (88) at (5, 6) {};
            \node [] (89) at (9, 6) {};
            \node [] (90) at (5, 0) {};
            \node [] (91) at (9, 0) {};
            \node [] (92) at (9, 6) {};
            \node [] (93) at (7, 3) {};
            \node [] (94) at (7, 3) {$\Acal$};
            \node [] (95) at (8, 5) {};
            \node [] (96) at (10.75, 5) {};
            \node [] (97) at (8, 4) {};
            \node [] (98) at (10.75, 4) {};
            \node [] (99) at (8, 2) {};
            \node [] (100) at (10.75, 2) {};
            \node [] (101) at (8, 1) {};
            \node [] (102) at (10.75, 1) {};
            \node [] (103) at (8.4, 3.2) {$\vdots$};
            \node [] (104) at (10.25, 3.2) {$\vdots$};
            \node [] (105) at (10.5, 5.25) {};
            \node [] (106) at (10.5, 4.75) {};
            \node [] (107) at (8.25, 4.25) {};
            \node [] (108) at (8.25, 3.75) {};
            \node [] (109) at (10.5, 2.25) {};
            \node [] (110) at (10.5, 1.75) {};
            \node [] (111) at (8.25, 1.25) {};
            \node [] (112) at (8, 1) {};
            \node [] (113) at (8.25, 0.75) {};
            \node [] (114) at (9.5, 6) {};
            \node [] (115) at (13.5, 6) {};
            \node [] (116) at (9.5, 0) {};
            \node [] (117) at (13.5, 0) {};
            \node [] (118) at (13.5, 6) {};
            \node [] (119) at (11.5, 3) {};
            \node [] (120) at (11.5, 3) {$\Bcal$};
            \node [] (121) at (12.5, 5.25) {};
            \node [] (122) at (14.5, 5.25) {};
            \node [] (123) at (12.5, 4.75) {};
            \node [] (124) at (14.5, 4.75) {};
            \node [] (129) at (12.9, 3.2) {$\vdots$};
            \node [] (130) at (14.1, 3.2) {$\vdots$};
            \node [] (131) at (14.25, 5.5) {};
            \node [] (132) at (14.25, 5) {};
            \node [] (133) at (12.75, 5) {};
            \node [] (134) at (12.75, 4.5) {};
            \node [] (140) at (9.25, 6.5) {$\Bcal[\Acal]$};
            \node [] (141) at (9.25, 6.5) {};
            \node [] (142) at (12.5, 4.25) {};
            \node [] (143) at (14.5, 4.25) {};
            \node [] (144) at (12.5, 3.75) {};
            \node [] (145) at (14.5, 3.75) {};
            \node [] (146) at (14.25, 4.5) {};
            \node [] (147) at (14.25, 4) {};
            \node [] (148) at (12.75, 4) {};
            \node [] (149) at (12.75, 3.5) {};
            \node [] (150) at (12.5, 2.25) {};
            \node [] (151) at (14.5, 2.25) {};
            \node [] (152) at (12.5, 1.75) {};
            \node [] (153) at (14.5, 1.75) {};
            \node [] (154) at (14.25, 2) {};
            \node [] (155) at (12.75, 2) {};
            \node [] (156) at (12.75, 1.5) {};
            \node [] (157) at (12.5, 1.25) {};
            \node [] (158) at (14.5, 1.25) {};
            \node [] (159) at (12.5, 0.75) {};
            \node [] (160) at (14.5, 0.75) {};
            \node [] (161) at (14.25, 1.5) {};
            \node [] (162) at (14.25, 1) {};
            \node [] (163) at (12.75, 1) {};
            \node [] (164) at (12.75, 0.5) {};
            \node [] (165) at (14.25, 2.5) {};
            \node [] (166) at (1, 5.25) {};
            \node [] (167) at (3, 5.25) {};
            \node [] (168) at (1, 4.75) {};
            \node [] (169) at (3, 4.75) {};
            \node [] (170) at (1.4, 3.2) {$\vdots$};
            \node [] (171) at (2.6, 3.2) {$\vdots$};
            \node [] (172) at (2.75, 5.5) {};
            \node [] (173) at (2.75, 5) {};
            \node [] (174) at (1.25, 5) {};
            \node [] (175) at (1.25, 4.5) {};
            \node [] (176) at (1, 4.25) {};
            \node [] (177) at (3, 4.25) {};
            \node [] (178) at (1, 3.75) {};
            \node [] (179) at (3, 3.75) {};
            \node [] (180) at (2.75, 4.5) {};
            \node [] (181) at (2.75, 4) {};
            \node [] (182) at (1.25, 4) {};
            \node [] (183) at (1.25, 3.5) {};
            \node [] (184) at (1, 2.25) {};
            \node [] (185) at (3, 2.25) {};
            \node [] (186) at (1, 1.75) {};
            \node [] (187) at (3, 1.75) {};
            \node [] (188) at (2.75, 2) {};
            \node [] (189) at (1.25, 2) {};
            \node [] (190) at (1.25, 1.5) {};
            \node [] (191) at (1, 1.25) {};
            \node [] (192) at (3, 1.25) {};
            \node [] (193) at (1, 0.75) {};
            \node [] (194) at (3, 0.75) {};
            \node [] (195) at (2.75, 1.5) {};
            \node [] (196) at (2.75, 1) {};
            \node [] (197) at (1.25, 1) {};
            \node [] (198) at (1.25, 0.5) {};
            \node [] (199) at (2.75, 2.5) {};
            \node [] (200) at (-8, 0.5) {};
            \node [] (201) at (-8, -0.75) {};
            \node [] (202) at (-8.25, -0.5) {};
            \node [] (203) at (-7.75, -0.5) {};
            \node [] (204) at (0, 6.5) {};
            \node [] (205) at (0, 5.25) {};
            \node [] (206) at (-0.25, 5.5) {};
            \node [] (207) at (0.25, 5.5) {};
            \node [] (212) at (7, 0.5) {};
            \node [] (213) at (7, -0.75) {};
            \node [] (214) at (6.75, -0.5) {};
            \node [] (215) at (7.25, -0.5) {};
            \node [] (217) at (11.5, 6.5) {};
            \node [] (218) at (11.5, 5.25) {};
            \node [] (219) at (11.25, 5.5) {};
            \node [] (220) at (11.75, 5.5) {};
        
            \draw (0.center) to (4.center);
            \draw (4.center) to (3.center);
            \draw (3.center) to (2.center);
            \draw (2.center) to (0.center);
            \draw (24.center) to (25.center);
            \draw (34.center) to (25.center);
            \draw (35.center) to (25.center);
            \draw (26.center) to (27.center);
            \draw (26.center) to (36.center);
            \draw (26.center) to (37.center);
            \draw (28.center) to (29.center);
            \draw (38.center) to (29.center);
            \draw (39.center) to (29.center);
            \draw (41.center) to (31.center);
            \draw (41.center) to (40.center);
            \draw (41.center) to (42.center);
            \draw (43.center) to (47.center);
            \draw (47.center) to (46.center);
            \draw (46.center) to (45.center);
            \draw (45.center) to (43.center);
            \draw (69.center) to (70.center);
            \draw (79.center) to (70.center);
            \draw (80.center) to (70.center);
            \draw (71.center) to (72.center);
            \draw (71.center) to (81.center);
            \draw (71.center) to (82.center);
            \draw (73.center) to (74.center);
            \draw (83.center) to (74.center);
            \draw (84.center) to (74.center);
            \draw (86.center) to (76.center);
            \draw (86.center) to (85.center);
            \draw (86.center) to (87.center);
            \draw (88.center) to (92.center);
            \draw (92.center) to (91.center);
            \draw (91.center) to (90.center);
            \draw (90.center) to (88.center);
            \draw (95.center) to (96.center);
            \draw (105.center) to (96.center);
            \draw (106.center) to (96.center);
            \draw (97.center) to (98.center);
            \draw (97.center) to (107.center);
            \draw (97.center) to (108.center);
            \draw (99.center) to (100.center);
            \draw (109.center) to (100.center);
            \draw (110.center) to (100.center);
            \draw (112.center) to (102.center);
            \draw (112.center) to (111.center);
            \draw (112.center) to (113.center);
            \draw (114.center) to (118.center);
            \draw (118.center) to (117.center);
            \draw (117.center) to (116.center);
            \draw (116.center) to (114.center);
            \draw (121.center) to (122.center);
            \draw (131.center) to (122.center);
            \draw (132.center) to (122.center);
            \draw (123.center) to (124.center);
            \draw (123.center) to (133.center);
            \draw (123.center) to (134.center);
            \draw (142.center) to (143.center);
            \draw (146.center) to (143.center);
            \draw (147.center) to (143.center);
            \draw (144.center) to (145.center);
            \draw (144.center) to (148.center);
            \draw (144.center) to (149.center);
            \draw (150.center) to (151.center);
            \draw (154.center) to (151.center);
            \draw (152.center) to (153.center);
            \draw (152.center) to (155.center);
            \draw (152.center) to (156.center);
            \draw (157.center) to (158.center);
            \draw (161.center) to (158.center);
            \draw (162.center) to (158.center);
            \draw (159.center) to (160.center);
            \draw (159.center) to (163.center);
            \draw (159.center) to (164.center);
            \draw (165.center) to (151.center);
            \draw (166.center) to (167.center);
            \draw (172.center) to (167.center);
            \draw (173.center) to (167.center);
            \draw (168.center) to (169.center);
            \draw (168.center) to (174.center);
            \draw (168.center) to (175.center);
            \draw (176.center) to (177.center);
            \draw (180.center) to (177.center);
            \draw (181.center) to (177.center);
            \draw (178.center) to (179.center);
            \draw (178.center) to (182.center);
            \draw (178.center) to (183.center);
            \draw (184.center) to (185.center);
            \draw (188.center) to (185.center);
            \draw (186.center) to (187.center);
            \draw (186.center) to (189.center);
            \draw (186.center) to (190.center);
            \draw (191.center) to (192.center);
            \draw (195.center) to (192.center);
            \draw (196.center) to (192.center);
            \draw (193.center) to (194.center);
            \draw (193.center) to (197.center);
            \draw (193.center) to (198.center);
            \draw (199.center) to (185.center);
            \draw (200.center) to (201.center);
            \draw (202.center) to (201.center);
            \draw (201.center) to (203.center);
            \draw (204.center) to (205.center);
            \draw (206.center) to (205.center);
            \draw (205.center) to (207.center);
            \draw (212.center) to (213.center);
            \draw (214.center) to (213.center);
            \draw (213.center) to (215.center);
            \draw (217.center) to (218.center);
            \draw (219.center) to (218.center);
            \draw (218.center) to (220.center);
        
        \end{tikzpicture}
    \caption{An oracle algorithm $\Acal$ (left), an interactive oracle algorithm $\Bcal$ (middle), and the oracle algorithm $\Bcal[\Acal]$ obtained by composing $\Acal$ and $\Bcal$ (right).  }\label{fig:algs}
\end{figure}

In contrast to $\Acal$ (where, for our purpose, any input could be hardwired), we explicitly allow an interactive oracle algorithm $\Bcal$ to obtain an input. Indeed, our transformation, which turns any adaptive oracle algorithm $\Acal$ into a static oracle algorithm $\bar \Acal$, needs to ``know'' $\bf q$, i.e., the number of queries $\Acal$ makes to the different oracles. Thus, this will be provided in the form of an input to $\Bcal$; for reasons to be clear, it be provided in unary, i.e., as $1^{\bf q} := (1^{q_1},\ldots,1^{q_n})$. 

We stress that we do not put any computational restriction on the oracle algorithms $\Acal$ (beyond bounding the queries to the individual oracles); however, we do want our transformation to preserve efficiency. Therefore, we say that an interactive oracle algorithm $\Bcal$ is {\em polynomial-time} if the number of local computation steps it performs is bounded to be polynomial in its input size, and where we declare that copying an {\em incoming} message on the simulation interface to an {\em outgoing} message on the oracle interface, and vice versa, is unit cost (irrespectively of the size of the message).  By providing $\bf q$ in unary, we thus ensure that $\Bcal$ is polynomial-time in $q_1 + \cdots + q_n$.

    \section{A General Adaptive-to-static Reduction for Multi-oracle Algorithms}

    \subsection{Our Result}
    
    Let $n \in \mathbb{N}$ be an arbitrary positive integer. We present here a generic adaptiv-to-static compiler $\cal B$ that, on input a vector ${\bf q} \in \mathbb{N}^n$, turns any {\em adaptive} $\bf q$-query oracle algorithm $\Acal^{\Ocal_1,\ldots,\Ocal_n}$ into a {\em static} $n \bf q$-query algorithm.

    \begin{theorem}\label{thm:reduction}%[Query staticization]
        There exists a polynomial-time interactive oracle algorithm $\Bcal$, such that for any ${\bf q}\in\mathbb{N}^{n}$ and any {\em adaptive} ${\bf q}$-query oracle algorithm $\Acal^{\Ocal_1,\dots,\Ocal_n}$, the oracle algorithm $\Bcal[\Acal](1^{\bf q})$ is a {\em static} $n{\bf q}$-query oracle algorithm that is functionally equivalent to $\Acal$ for all {\em stateless} instantiations of the oracles $\Ocal_1,\dots,\Ocal_n$. 
    \end{theorem}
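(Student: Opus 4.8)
The plan is to peel off the combinatorial core and then add a routine algorithmic wrapper around it.

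\emph{Reformulation.} The only feature of a run of $\Acal$ that matters for counting queries is the sequence of oracle \emph{indices} it calls, and by assumption this sequence is classical. Thus any run of a ${\bf q}$-query $\Acal$ produces a word $w\in\{1,\dots,n\}^{*}$ with $\#_{i}(w)\le q_{i}$ for all $i$ (write $\#_{i}$ for the number of occurrences of the letter $i$); call such a word \emph{${\bf q}$-bounded}. Dually, a static $n{\bf q}$-query algorithm is one whose index sequence is a \emph{fixed} word $W$ with $\#_{i}(W)\le nq_{i}$, only the query \emph{inputs} being adaptive. So the whole theorem reduces to the following purely combinatorial claim: for every ${\bf q}\in\mathbb{N}^{n}$ there is a word $W=W_{\bf q}$ over $\{1,\dots,n\}$, computable from $1^{\bf q}$ in time $\mathrm{poly}(q_{1}+\dots+q_{n})$, with $\#_{i}(W)\le nq_{i}$ for all $i$, such that every ${\bf q}$-bounded word is a subsequence of $W$. (Since every ${\bf q}$-bounded word is a subsequence of some permutation of the multiset $\{1^{q_{1}},\dots,n^{q_{n}}\}$, it is equivalent to ask that $W$ be a common supersequence of all these permutations.)

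\emph{The compiler, given $W$.} On input $1^{\bf q}$, the algorithm $\Bcal$ computes $W=W_{\bf q}$, sets a pointer $p:=0$, and runs $\Acal$ on its simulation interface. When $\Acal$ makes a (possibly quantum) query to $\Ocal_{i}$, $\Bcal$ moves $p$ forward to the next index $p'>p$ with $W_{p'}=i$, forwards $\Acal$'s query register to $\Ocal_{i}$ on its oracle interface, hands the answer back to $\Acal$, and sets $p:=p'$; every index of $W$ that gets skipped in the process, and every index left over after $\Acal$ halts, is ``used up'' by a classical dummy query (on input $0$, say) whose answer is discarded, so that the index sequence of $\Bcal$ is exactly the fixed word $W$. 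Finally $\Bcal$ outputs whatever $\Acal$ outputs. As $\Acal$ is ${\bf q}$-query, its induced word $w$ is ${\bf q}$-bounded, hence a subsequence of $W$; and since the greedy ``advance to the next occurrence'' rule realises an embedding of a word into any supersequence of it whenever one exists, $p$ never runs off the end of $W$ and $\Bcal[\Acal](1^{\bf q})$ is well defined. It is static, makes exactly $\#_{i}(W)\le nq_{i}$ queries to $\Ocal_{i}$, and is polynomial-time because $|W|=n\sum_{j}q_{j}$, $W$ is poly-time computable, and forwarding a message is unit cost.

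\emph{Functional equivalence.} Statelessness is what makes everything go through: for a stateless $\Ocal_{i}$ the map applied on the $i$-th oracle interface is one and the same (a fixed function, or a fixed unitary/channel in the quantum case) regardless of the global time step and regardless of how many other queries, real or dummy, came before. Hence, for every instantiation $O_{1},\dots,O_{n}$, each query $\Acal$ issues inside $\Bcal$ gets exactly the answer it would get in a standalone run of $\Acal^{O_{1},\dots,O_{n}}$, the dummy queries touching none of $\Acal$'s registers; so the state of $\Acal$ after each of its queries, and in particular its final output (state), is the same in both executions. This proves the theorem modulo the combinatorial claim, and works verbatim for randomised and for quantum $\Acal$.

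\emph{Building $W$, the main obstacle.} What remains, and where I expect essentially all the difficulty to sit, is the explicit construction of $W$ meeting the tight bound $\#_{i}(W)\le nq_{i}$. The construction I would try is to ``thin out'' the rare letters: with $Q:=\sum_{j}q_{j}$, place letter $i$ at the $nq_{i}$ positions $\lceil kQ/q_{i}\rceil$, $k=1,\dots,nq_{i}$ (so the occurrences of $i$ are spread $\approx Q/q_{i}$ apart over a range of length $nQ$), breaking ties between letters by a fixed order. Universality would then follow from a pointer invariant: after the greedy matching has consumed a prefix of a ${\bf q}$-bounded word containing $c_{j}\le q_{j}$ copies of letter $j$, the pointer sits at position at most $\max_{j}(\text{position of the }c_{j}\text{-th }j)+\text{slack}$; since the $q_{j}$-th copy of each letter $j$ lies around position $Q$ while the slack is at most $(n-1)\max_{j}Q/q_{j}\le(n-1)Q$, the pointer stays below $nQ$ throughout. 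Making this slack bookkeeping rigorous, i.e.\ checking that thinning out the infrequent letters does not destroy the subsequence property, is the crux. (An induction on $n$ is tempting, but the obvious recursion, splitting a permutation at the occurrences of one letter and recursing on the remaining multiset, inflates the counts of the other letters by a factor $q_{\mathrm{split}}+1$; one would instead have to recurse on the total budget $Q$, or interleave several passes so that the residual instance after any prefix of the adversary's queries is again of the required shape, and I expect the direct construction to come out cleaner.)
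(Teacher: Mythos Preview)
Your reduction to the combinatorial supersequence problem, your compiler given $W$, your functional-equivalence argument via statelessness, and even your construction of $W$ (spreading letter $i$ evenly at spacing $Q/q_{i}$ over a window of length $nQ$) are all exactly what the paper does. The only substantive gap is the part you yourself flag as ``the crux'': the proof that $W$ is a universal supersequence.

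Here your proposed invariant is the wrong one. Tracking $\max_{j}(\text{position of the }c_{j}\text{-th }j)$ plus a slack term is awkward precisely because the max does not compose well under concatenation; that is why you anticipate difficulty in making the bookkeeping rigorous. The paper's invariant is a \emph{sum}, and once you see it the lemma becomes a two-line computation. Work on the real line $(0,n]$ (equivalently, your $(0,nQ]$ scaled down by $Q$): after greedily matching the prefix $s'_{1},\dots,s'_{j}$, the pointer sits at a time at most
\[
\tau_{j}\;:=\;\sum_{\ell\le j}\frac{1}{q_{s'_{\ell}}}\,.
\]
This holds inductively because the half-open interval $(\tau_{j-1},\tau_{j}]$ has length $1/q_{s'_{j}}$ and hence contains an integer multiple of $1/q_{s'_{j}}$, which by construction is an occurrence of the letter $s'_{j}$. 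Finally,
\[
\tau_{m}\;=\;\sum_{\sigma}\frac{\#\{\ell:s'_{\ell}=\sigma\}}{q_{\sigma}}\;\le\;\sum_{\sigma}\frac{q_{\sigma}}{q_{\sigma}}\;=\;n\,,
\]
so the pointer never leaves $(0,n]$ and the embedding succeeds. No slack analysis, no induction on $n$, no recursion on $Q$ is needed; replace your last paragraph with this and the proof is complete.
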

    
    \begin{remark}
     As phrased, Theorem~\ref{thm:reduction} applies to oracle algorithms $\Acal$ that have no input. This is merely for simplicity. In case of an oracle algorithm $\Acal$ that takes an input, we can simply apply the statement to the algorithm $\Acal(x)$ that has the input $x$ hardwired, and so argue that Theorem~\ref{thm:reduction} also applies in that case. 
    \end{remark}
    
    \begin{remark}
     $\Bcal[\Acal]$ is guaranteed to behave the same way as $\Acal$ for {\em stateless} (instantiations of the) oracles only. This is become most of the queries that $\Bcal[\Acal]$ makes are actually dummy queries (i.e., queries on a default input and with the response ignored), which have no effect in case of stateless oracles, but may mess up things in case of stateful oracles. Theorem~\ref{thm:reduction} extends to arbitrary stateful oracles if we allow $\Bcal[\Acal]$ to {\em skip} queries instead of making dummy queries (but the skipped queries would still count towards the query complexity).  
    \end{remark}

Given the vector ${\bf q} = (q_1,\dots,q_n) \in \mathbb{N}^{n}$, the core of the problem is to find a {\em fixed} sequence of $\Ocal_i$'s in which each individual $\Ocal_i$ occurs at most $n q_i$ times, and so that {\em every} sequence of $\Ocal_i$'s that contains each individual $\Ocal_i$ at most $q_i$ times can be embedded into the former. We consider and solve this abstract problem in the following section, and then we wrap up the proof of Theorem~\ref{thm:reduction} in Section~\ref{sec:Proof}.

    \subsection{The Technical Core}

    \def\TS#1#2{{\cal P}_{<\infty}(#1 \!\times\! #2)}

    Let  $\Sigma$ be an non-empty finite set of cardinality $n$. We refer to $\Sigma$ as the {\em alphabet}. As is common, $\Sigma^*$ denotes the set of finite strings over the alphabet $\Sigma$. In other words, the elements of $\Sigma^*$ are the strings/sequences $s = (s_1,\ldots,s_\ell) \in \Sigma^\ell$ with arbitrary $\ell \in \mathbb{N}$ (including $\ell = 0$). %, where we write $\epsilon$ for the {\em empty string}, i.e., the sequence of length $\ell = 0$. 
    
    Following standard terminology, for $s  = (s_1,\ldots,s_\ell)$ and $s' = (s'_1,\ldots,s'_m)$ in $\Sigma^*$, the {\em concatenation} of $s$ and $s'$ is the string $s\|s' = (s_1,\ldots,s_\ell,s'_1,\ldots,s'_m)$, and $s'$ is a {\em subsequence} of $s$, denoted $s' \sqsubseteq s$ if there exist integers $1 \leq j_1 < \ldots < j_m \leq \ell$ with $(s_{j_1},\ldots,s_{j_m}) = (s'_1,\ldots,s'_m)$. Such an integer sequence $(j_1,\ldots,j_m)$ is then called an {\em embedding} of $s'$ into $s$.%
    \footnote{We use {\em string} and {\em sequence} interchangeably; however, following standard terminology, there is a difference between a {\em substring} and {\em subsequence}: namely, a substring is a subsequence that admits an embedding with $j_{i+1} = j_i + 1$.     } 
    %It obviously holds that if $s' \leq s$ and $r' \leq r$ then $s'\|r' \leq s\|r$. 
    
    Finally, for a function $q:\Sigma\to\mathbb{N}, \sigma \mapsto q_\sigma$, we say that $s  = (s_1,\ldots,s_\ell) \in \Sigma^*$ has {\em characteristic} (at most) $q$ if $\#\{i \,|\, s_i=\sigma\} = q_\sigma$ ($\leq q_\sigma$) for any $\sigma \in \Sigma$.

    \begin{lemma}[Embedding Lemma]\label{lem:string}
        Let $\Sigma$ be an alphabet of size $n$, and let $q:\Sigma\to\mathbb{N}$, $\sigma \mapsto q_\sigma$. Then, there exists a string $s \in \Sigma^*$ with characteristic $n \cdot q: \sigma \mapsto n \cdot q_\sigma$ such that any string $s'\in\Sigma^*$ with characteristic at most $q$ is a subsequence of~$s$, i.e., $s'\sqsubseteq s$. 
    \end{lemma}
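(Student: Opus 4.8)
I would give an explicit $s$ and then verify the subsequence property by a greedy embedding analysed with an amortized counting argument. A remark on why the ``obvious'' attempts fail: peeling off one symbol $\sigma$, recursing on $\Sigma\setminus\{\sigma\}$ to obtain a string $t$, and interleaving copies of $t$ with occurrences of $\sigma$ forces on the order of $q_\sigma$ copies of $t$, hence a blow-up of the multiplicities of the remaining symbols that grows with ${\bf q}$ rather than with $n$. Instead, the idea is to lay $s$ out as a long sequence of \emph{rounds} and to spread the $nq_\sigma$ occurrences of each symbol $\sigma$ \emph{as evenly as possible} over these rounds, so that the number of rounds one has to wait before seeing the next occurrence of any given $\sigma$ is small relative to a budget that the counting argument can afford.

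\textbf{Construction.}
Symbols $\sigma$ with $q_\sigma=0$ occur in no string of characteristic at most $q$, so we place them in no round of $s$ (their count in $s$ is then $0=nq_\sigma$); assume at least one $\sigma$ has $q_\sigma\ge1$, else $s=\varepsilon$ works. Let $N$ be a common multiple of all $nq_\sigma$ with $q_\sigma\ge1$ (for instance $N=n\cdot\mathrm{lcm}\{q_\sigma:q_\sigma\ge1\}$), and for such $\sigma$ set $d_\sigma:=N/(nq_\sigma)\in\mathbb{N}_{\ge1}$. Define
\[
  s\;=\;\rho_1\rho_2\cdots\rho_N,
\]
where the $r$-th round $\rho_r\in\Sigma^*$ lists, in some fixed order, exactly those $\sigma$ (with $q_\sigma\ge1$) for which $d_\sigma\mid r$. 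Then $\sigma$ occurs in $s$ once for each multiple of $d_\sigma$ in $\{1,\dots,N\}$, i.e.\ exactly $N/d_\sigma=nq_\sigma$ times, so $s$ has characteristic $n\cdot q$. (Its length is $n\sum_\sigma q_\sigma$; note $N$ itself can be far larger, as most rounds may be empty.)

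\textbf{Embedding.}
Let $s'=(s'_1,\dots,s'_\ell)$ have characteristic at most $q$; I would embed it greedily: put $p_0:=0$ and let $p_m$ be the least position $>p_{m-1}$ of $s$ carrying the symbol $s'_m$, and let $R_m$ be the index of the round containing $p_m$ (with $R_0:=0$). The crucial point is
\[
  R_m\;\le\;R_{m-1}+d_{s'_m},
\]
because the first occurrence of $s'_m$ past $p_{m-1}$ is either already in round $R_{m-1}$ or in a later round whose index is a multiple of $d_{s'_m}$, and the least multiple of $d_{s'_m}$ strictly exceeding $R_{m-1}$ is at most $R_{m-1}+d_{s'_m}$. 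Telescoping and grouping occurrences by symbol, using that $s'$ has at most $q_\sigma$ copies of $\sigma$,
\[
  R_m\;\le\;\sum_{j\le m}d_{s'_j}\;=\sum_{\sigma:\,q_\sigma\ge1}\#\{j\le m:\,s'_j=\sigma\}\,d_\sigma\;\le\sum_{\sigma:\,q_\sigma\ge1}q_\sigma\cdot\frac{N}{nq_\sigma}\;\le\;N.
\]
Since the $d_\sigma$ are positive integers with $\sum_{j\le\ell}d_{s'_j}\le N$, whenever $m\le\ell$ we get $\sum_{j<m}d_{s'_j}\le N-1$, so $R_{m-1}<N$; and as $N$ is itself a multiple of $d_{s'_m}$ (because $nq_{s'_m}\mid N$), a suitable multiple of $d_{s'_m}$ in $\{1,\dots,N\}$ always exists. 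Hence each $p_m$ is well defined and lies in a round $\le N$, i.e.\ $p_m\le|s|$, which gives $s'\sqsubseteq s$ and proves the lemma.

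\textbf{Main obstacle.}
The whole difficulty is in hitting on the construction; once the ``rounds with evenly spread occurrences'' viewpoint is in place, the verification is the short amortized count above. The one thing that must be calibrated exactly is $d_\sigma=N/(nq_\sigma)$: it makes the per-symbol waiting cost $d_\sigma$ just small enough that the total cost $\sum_\sigma q_\sigma d_\sigma$ stays at most $N$, which is precisely what guarantees that the greedy pointer never runs past the last round.
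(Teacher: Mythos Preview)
Your proof is correct and takes essentially the same approach as the paper: both constructions distribute the $nq_\sigma$ copies of each $\sigma$ evenly (the paper at multiples of $1/q_\sigma$ on the real interval $(0,n]$, you at multiples of $d_\sigma=N/(nq_\sigma)$ among $N$ discrete rounds), and both verify the embedding via the identical telescoping bound $\sum_j 1/q_{s'_j}\le n$ (equivalently $\sum_j d_{s'_j}\le N$). Your integer/round formulation via an LCM is just a discretization of the paper's real-line picture; the key idea and the counting argument coincide.
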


    The idea of the construction of the sequence $s$ is quite simple: First, we evenly distribute $n \cdot q_\sigma$ copies of $\sigma$ within the interval $(0,n]$ by ``attaching'' one copy of $\sigma$ to every point in $(0,n]$ that is an integer multiple of $1/q_\sigma$ (see Fig.~\ref{fig:WalkTheLine}). Note that it may happen that different symbols are ``attached'' to the same point. Then, we walk along the interval from $0$ and $n$ and, one by one, collect the symbols we encounter in order to build up $s'$ from left to right; in case we encounter a point with multiple symbols ``attached'' to it, we collect them in an arbitrary order. 

\begin{figure}
\begin{center}
    \begin{tikzpicture}[scale=0.7]
        
        \node [] (0) at (0, 0) {};
        \node [] (1) at (4, 0) {};
        \node [] (2) at (8, 0) {};
        \node [] (3) at (12, 0) {};
        \node [] (4) at (6, 0) {};
        \node [] (5) at (0, 0.25) {};
        \node [] (6) at (0, -0.25) {};
        \node [] (7) at (4, 0.25) {};
        \node [] (8) at (4, -0.25) {};
        \node [] (9) at (6, 0.25) {};
        \node [] (10) at (6, -0.25) {};
        \node [] (11) at (8, 0.25) {};
        \node [] (12) at (8, -0.25) {};
        \node [] (13) at (16.5, 0) {};
        \node [] (14) at (16, 0.25) {};
        \node [] (15) at (16, -0.25) {};
        \node [] (16) at (12, 0.25) {};
        \node [] (17) at (12, -0.25) {};
        
        \node [] (6m) at (0, -0.55) {$0$};
        \node [] (7m) at (4, 0.55) {$\{\sigma_1 \}$};
        \node [] (8m) at (4, -0.55) {$1/q_{\sigma_1}$};
        \node [] (9m) at (6, 0.55) {$\{\sigma_2 \}$};
        \node [] (10m) at (6, -0.55) {$1/q_{\sigma_2}$};
        \node [] (11m) at (8, 0.55) {$\{\sigma_1\}$};
        \node [] (12m) at (8, -0.55) {$2/q_{\sigma_1}$};
        \node [] (16m) at (12, 0.55) {$\{\sigma_1,\sigma_2\}$};
        \node [] (17m) at (12, -0.55) {$3/q_{\sigma_1}=2/q_{\sigma_{2}}$};
        \node [] (18m) at (14.5, -0.55) {$\dots$};
        \node [] (19m) at (14.5, 0.55) {$\dots$};

        \draw (0.center) to (3.center);
        \draw (5.center) to (6.center);
        \draw (7.center) to (8.center);
        \draw (9.center) to (10.center);
        \draw (11.center) to (12.center);
        \draw [bend left=75, looseness=0.00] (3.center) to (13.center);
        \draw (14.center) to (13.center);
        \draw (13.center) to (15.center);
        \draw (16.center) to (17.center); 
    \end{tikzpicture}
    \vspace{-2ex}
    \end{center}
    \caption{Constructing the string $s$ by distributing the different symbols evenly within the interval $(0,n]$ (here with $3/q_{\sigma_1}=2/q_{\sigma_{2}}$), and then collecting them from left to right.  }\label{fig:WalkTheLine}
    \end{figure}
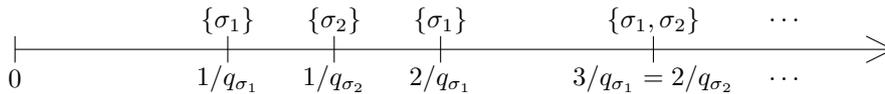
    
    It is then not too hard to convince yourself that this $s$ indeed satisfies the claim. Namely, for any $s' = (s'_1,\ldots,s'_m)$ as considered, we can again walk along the interval from $0$ and $n$, and we will then encounter all the symbols of $s'$, one by one: we will encounter the symbol $s'_1$ within the walk from $0$ to $1/q_{s'_1}$, the symbol $s'_2$ then within the walk from $1/q_{s'_1}$ to $1/q_{s'_1} + 1/q_{s'_2}$, etc. 

    Putting this idea into a formal proof is somewhat tedious, but in the end not too difficult. In order to formalize things properly, we generalize the standard notion of a sequence $s \in \Sigma^*$ in a way that allows us to talk about ``attaching'' a symbol to a point on $\mathbb{R}$, etc., in a rigorous way. 
    Formally, we define a {\em line sequence} to be an arbitrary finite (possibly empty) subset $S \subseteq \mathbb{R} \times \Sigma$, i.e., 
    $$
    S = \{(t_1,s_1),\ldots,(t_\ell,s_\ell)\} \in \TS{\mathbb{R}}{\Sigma} \, ,
    $$
    where w.l.o.g.\ we will always assume that $t_1 \leq \ldots \leq t_\ell$. 
    We may think of the symbol $s_i$ to ``occur at the time'' $t_i$.\footnote{Note that we allow $t_i = t_j$ for $i \neq j$ while the definition prohibits $(t_i,s_i) = (t_j,s_j)$. If desired, one could allow the latter by letting $S$ be a multi-set, but this is not necessary for us. }
    For a subset $T \subset \mathbb{R}$, the set $\TS{T}{\Sigma}$ then obviously denotes the set of line sequences with $t_1,\ldots,t_\ell \in T$. 
    %For for two subsets $T,T' \subset \mathbb{R}$ with $T < T'$ point-wise, and for line sequences $S \in \TS{T}{\Sigma}$ and $S' \in \TS{T'}{\Sigma}$, we let $S\|S' := S \cup S' \in \TS{T \cup T'}{\Sigma}$ be the {\em concatentation} of the two. 

    Assuming that the alphabet $\Sigma$ is equipped with a total order $\leq$, any line sequence $S = \{(t_1,s_1),\ldots,(t_\ell,s_\ell)\}$ is naturally associated with the ordinary sequence 
    $$
    \pi(S) := (s_1,\ldots,s_\ell) \in \Sigma^* \, ,
    $$
    which is uniquely determined by the convention $t_1 \leq \ldots \leq t_\ell$ and insisting on $s_i \leq s_j$ whenever $t_i = t_j$ for $i < j$. 

    This {\em projection} $\pi: \TS{\mathbb{R}}{\Sigma} \rightarrow \Sigma^*$ preserves the characteristic of the sequence, i.e., if $s = (s_1,\ldots,s_\ell) = \pi(S)$ then
    \begin{equation}\label{eq:preserve}
    \#\{t \,|\, (t,\sigma)\in S\} = \#\{i \,|\, s_i=\sigma\} 
    \end{equation}
    for any $\sigma \in \Sigma$. 
    Furthermore, for $T,T' \subset \mathbb{R}$ with $T < T'$ point-wise, and for $S \in \TS{T}{\Sigma}$ and $S' \in \TS{T'}{\Sigma}$, it is easy to see that 
    $
    \pi(S \cup S') =  \pi(S)\|\pi(S') \, ,
    $
    from which it then follows that for ordinary sequences $s,s' \in \Sigma^*$
    \begin{equation}\label{eq:concat}
    s \sqsubseteq \pi(S) \,\wedge\, s' \sqsubseteq \pi(S') \;\Longrightarrow\; s\|s' \sqsubseteq \pi(S)\|\pi(S')  = \pi(S \cup S') \, . 
    \end{equation}
    A final, simple observation, which follows directly from the definitions, is that for $\sigma \in \Sigma$, i.e.\ a sequence of length $m=1$, $\sigma \sqsubseteq \pi(S)$ holds if and only if there exists a time $t \in \mathbb{R}$ such that $(t,\sigma) \in S$.

    \begin{proof}[Proof of Lemma~\ref{lem:string}]
        For any symbol $\sigma \in \Sigma$ let $S_\sigma$ be a line sequence 
        $$
        S_\sigma := \big\{\textstyle\frac{1}{q_\sigma},...,\frac{nq_\sigma}{q_\sigma}\big\} \times \{\sigma\} \in \TS{(0,n]}{\Sigma} \, ,
        $$
        and set $S := \bigcup_{\sigma \in \Sigma} S_\sigma$. 
        We will show that $s := \pi(S)$ is as claimed. 
        
        The claim on the characteristic of $s$ follows from the preservation of the characteristic under $\pi$, i.e.\ (\ref{eq:preserve}), and from $\#\{t\,|\,(t,\sigma)\in S\} = \#S_\sigma = n\cdot q_\sigma$, which holds by construction of $S$. 
        
        Let $s' = (s'_1,\dots,s'_m)\in\Sigma^*$ be arbitrary with characteristic bounded by~$q$. 
        We consider the times $\tau_j := 1/q_{s'_1} + \cdots + 1/q_{s'_j}$ for $j \in \{1,\ldots,m\}$, and we let $T_j$ be the interval 
        $$
        T_j := \big(\tau_{j-1},\tau_j \big] = \big(\tau_{j-1},\tau_{j-1} \!+\! \textstyle\frac{1}{q'_j}\big] \subset \mathbb{R} \, ,
        $$
        and decompose $S = S_1 \cup \ldots \cup S_m$ with $S_j := S \cap (T_j \!\times\! \Sigma) \in \TS{T_j}{\Sigma}$. 
        Here, we exploit that
        $$
        \tau_m  = \sum_{\sigma \in \Sigma} \frac{\#\{i\,|\, s'_i = \sigma\}}{q_\sigma} \leq \sum_{\sigma \in \Sigma} \frac{q_\sigma}{q_\sigma} = n \, ,
        $$
        and so the $S_j$'s indeed cover all of $S \in \TS{(0,n]}{\Sigma}$. 
        Given that the interval $T_j \subset (0,n]$ has size $1/q_{s'_j}$, there exists a time $t_j \in T_j  \cap \big\{\textstyle\frac{1}{q_\sigma},...,\frac{nq_\sigma}{q_\sigma}\big\}$. But then, $(t_j,s'_j) \in S_j$ by construction of $S$, and therefore $s'_j \sqsubseteq \pi(S_j)$. Finally, since $T_{j-1} < T_j$, property (\ref{eq:concat}) implies that 
        $$
        s' = s'_1\| \cdots \|s'_m \sqsubseteq \pi(S_1 \cup \ldots \cup S_m) = s
        $$
    which was to be shown. 
    %        As the base case where $Q=0$, the empty string $s'$ is obviously a subseqeunce of any string. For $Q>0$, write $s'=(s^{\sf pre},\sigma')$ with its last symbol being $\sigma'$. There must be some $t\in (n-1/q_{\sigma'},n]$ such that $(t,\sigma')\in S$.
    %        Notice that
    %        $$
    %        t^{\sf pre}:=\sum_{1\leq i< Q}1/q_{s^{\sf pre}_i}
    %        = \sum_{1\leq i\leq Q}1/q_{s'_i} - 1/q_{\sigma'}
    %        \leq \sum_{\sigma\in\Sigma}q_{\sigma}\cdot 1/q_{\sigma} - 1/q_{\sigma'}
    %        = n - 1/q_{\sigma'}\;.
    %        $$
    %        We therefore partition $S$ into $R:=S\cap {\cal P}\Sigma^{(n-1/q_{\sigma'},n]}$ and $L:= S\cap {\cal P}\Sigma^{(0,n-1/q_{\sigma'}]}$ so that $\sigma'\leq \pi(R)$. By induction hyptothesis,
    %        $$
    %        s^{\sf pre}\leq \pi\left(S\cap{\cal P}\Sigma^{(0,t^{\sf pre}]}\right)\leq \pi\left(L\right)\;,
    %        $$
    %        where the second inequality is due to inclusion of underlying sets.
    %        Putting things together,
    %        $$
    %        s'=s^{\sf pre}\Vert \sigma'\leq \pi(L)\Vert\pi(R)=\pi(S)=s\;.
    %        $$
    %        The proof is concluded.
    \end{proof}
While Lemma~\ref{lem:string} above settles the existence question, the following two observations settle the corresponding efficiency questions. For concreteness, we assume $\Sigma = \{1,\ldots,n\}$ below, and thus can identify the function $q:\Sigma\to\mathbb{N}$, $\sigma \mapsto q_\sigma$ with the vector ${\bf q} = (q_1,\ldots,q_n)$.  

First, we observe that the line sequence $S$ defined in the proof above, as well as its projection $s = \pi(S)$, can be computed in polynomial time in $q_1+\cdots + q_n$; thus, we have the following. 
    \begin{lemma}\label{lem:efficienct_s}
        There exists a polynomial-time algorithm that, on input $1^{\bf q}$, computes a string $s\in\Sigma^*$ as specified in the proof of Lemma~\ref{lem:string}.
    \end{lemma}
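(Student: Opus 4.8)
The plan is to give an explicit algorithm that mirrors, step by step, the construction of the line sequence $S$ and its projection $s = \pi(S)$ from the proof of Lemma~\ref{lem:string}. From the input $1^{\bf q} = (1^{q_1},\ldots,1^{q_n})$ the algorithm first reads off $n$ and the integers $q_1,\ldots,q_n$, and sets $Q := q_1 + \cdots + q_n$. Since $Q$, and also $n$, are bounded by the length of the input, ``polynomial time in $Q$'' coincides with ``polynomial time in the input length'' — which is exactly the reason for providing $\bf q$ in unary.

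Next, the algorithm enumerates $S = \bigcup_{\sigma\in\Sigma} S_\sigma$ as a list of pairs: for each $\sigma \in \{1,\ldots,n\}$ with $q_\sigma > 0$ and each $j \in \{1,\ldots,nq_\sigma\}$ it appends the pair $\big((j,q_\sigma),\sigma\big)$, whose first component is the exact representation of the rational time $t = j/q_\sigma$ as an ordered pair of integers (for $\sigma$ with $q_\sigma=0$ nothing is added, consistently with $S_\sigma = \emptyset$). This produces a list of $\sum_{\sigma} nq_\sigma = nQ$ entries, and building it takes time polynomial in $Q$. The algorithm then sorts this list with respect to the key that first compares the rational times and, for equal times, compares the symbols according to the fixed total order on $\Sigma$ — precisely the ordering used in the definition of $\pi$. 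Comparing two times $j/q_\sigma$ and $j'/q_{\sigma'}$ reduces to comparing the integers $j\,q_{\sigma'}$ and $j'\,q_\sigma$, which are at most $(nQ)\cdot Q$ and hence have bit-length $O(\log Q)$, so each comparison runs in polynomial time and therefore so does the whole sort (say $O(nQ\log(nQ))$ comparisons). Finally the algorithm outputs the sequence of symbols read off the sorted list, which by construction is exactly $s = \pi(S)$ as specified in the proof of Lemma~\ref{lem:string}.

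Correctness is then immediate, since the algorithm literally builds the set $S$ and realises $\pi$ by the prescribed sort. The only point that needs a little care — and the nearest thing to an obstacle, though a mild one — is that the ``times'' are rationals, so the comparisons driving the sort must be carried out with exact integer arithmetic (via cross-multiplication) rather than with floating-point approximations; in particular two distinct fractions with large, non-coprime denominators must be distinguished correctly. Once this is noted, all integers manipulated during the computation are polynomially bounded in $Q$ (because $\bf q$ is given in unary), and the claimed polynomial running time follows.
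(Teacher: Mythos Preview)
Your proposal is correct and is essentially a fully spelled-out version of what the paper merely states as an observation: the paper simply remarks that the line sequence $S$ and its projection $s=\pi(S)$ can be computed in polynomial time in $q_1+\cdots+q_n$, without giving any further details. Your explicit enumeration of the pairs, the exact rational comparison via cross-multiplication, and the sorting argument realise precisely that observation, so the two approaches coincide.
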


Furthermore, for any $s'\in\Sigma^*$ with characteristic at most $q$, for which we then know by Lemma~\ref{lem:string} that $s'$ can be embedded into $s$, the following ensures that this embedding can be computed efficiently and {\em on the fly}. 

    \begin{lemma}\label{lem:online_emb}
        There exists a polynomial-time algorithm $\Ecal$ such that for every string $s\in\Sigma^*$ and every subsequence $s' =(s'_1,\dots,s'_m)\sqsubseteq s$, the following holds. Computing inductively $j_i\gets \Ecal(s,s'_i,j_{i-1})$ for every $i\in[m]$, where $j_0 : = 0$, results in an increasing sequence $j_1<\dots<j_m$ with
        $$
        s' = (s_{j_1},\dots,s_{j_m})\;.
        $$
    \end{lemma}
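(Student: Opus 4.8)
To prove Lemma~\ref{lem:online_emb}, the plan is to take $\Ecal$ to be the obvious \emph{greedy} (leftmost) matcher: on input $(s,\sigma,j)$ with $s=(s_1,\dots,s_\ell)$, the algorithm scans the positions $j+1,j+2,\dots,\ell$ in increasing order and outputs the first index $j'$ with $s_{j'}=\sigma$ (it outputs, say, $\bot$ if no such index exists, though we will argue that this never happens under the hypothesis of the lemma). Since this performs at most $\ell$ unit-cost symbol comparisons, $\Ecal$ runs in time polynomial in its input size, which takes care of the efficiency part immediately.

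For correctness I would argue by induction on $i$, carrying the following invariant: after the $i$-th call, the algorithm has produced indices $j_1<\dots<j_i$ with $(s_{j_1},\dots,s_{j_i})=(s'_1,\dots,s'_i)$, and there still exists an embedding of the remaining string $(s'_{i+1},\dots,s'_m)$ into $s$ that uses only positions strictly greater than $j_i$. The base case $i=0$ is exactly the hypothesis $s'\sqsubseteq s$, which supplies an embedding $(k_1,\dots,k_m)$ with $0=j_0<k_1<\dots<k_m$. For the inductive step, assume the invariant holds after call $i-1$, witnessed by an embedding $(k_i,\dots,k_m)$ of $(s'_i,\dots,s'_m)$ with $j_{i-1}<k_i<\dots<k_m$. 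In particular $s_{k_i}=s'_i$ with $k_i>j_{i-1}$, so when $\Ecal$ scans from position $j_{i-1}+1$ it does find an occurrence of $s'_i$; let $j_i$ be the leftmost one, so that $j_{i-1}<j_i\le k_i$. Since $j_i\le k_i<k_{i+1}<\dots<k_m$, the tuple $(k_{i+1},\dots,k_m)$ is an embedding of $(s'_{i+1},\dots,s'_m)$ using only positions $>j_i$, which re-establishes the invariant after call $i$. After $m$ steps the invariant yields $j_1<\dots<j_m$ with $(s_{j_1},\dots,s_{j_m})=s'$, as required.

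The one point that needs care is the precise formulation of the invariant: it does not suffice to prove inductively that each $j_i$ exists, because the existence of a valid $j_i$ in step $i$ depends on what happened in the earlier, greedily chosen steps. One has to carry along the stronger statement that, after fixing $j_1,\dots,j_i$ greedily, the suffix $(s'_{i+1},\dots,s'_m)$ can still be embedded to the right of $j_i$; the key observation enabling the inductive step is the standard exchange argument, namely that replacing the prescribed index $k_i$ by the not-later greedy choice $j_i\le k_i$ never destroys the embedding of the tail. Beyond getting this invariant right, the argument is a routine ``greedy stays ahead'' induction, so I do not anticipate any real obstacle.
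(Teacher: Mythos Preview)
Your proposal is correct and takes essentially the same approach as the paper: both define $\Ecal$ as the greedy leftmost matcher, i.e.\ $j_i := \min\{k>j_{i-1} : s_k = s'_i\}$. The paper merely asserts that the minimum is well-defined ``by the assumption that $s'$ is a subsequence of $s$'' without spelling out the inductive invariant, whereas you supply the standard greedy-stays-ahead argument explicitly; your version is more careful but not different in substance.
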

    
    The algorithm $\Ecal$ simply follows the obvious greedy strategy: for each $s'_i$ it looks for the next $j_i$ for which $s'_i = s_{j_i}$. More formally: 
    
    \begin{proof}
        The algorithm $\Ecal(s,s'_i,j_{i-1})$ computes 
        \begin{equation}\label{eq:embed_idx}
            j_{i}:=\min\left\{k\in{\mathbb{N}} \,|\, j_{i-1}<k\leq m, s_k = s'_{i} \right\}  \, .
        \end{equation}
It can be easily shown that the minimum is well-defined, i.e. taken over a non-empty set for each $i$ by the assumption that $s'$ is a subsequence of $s$, and thus by construction, every $j_i$ is such that $s'_i=s_{j_i}$ while keeping $j_1<\dots<j_n$ increasing. This concludes the proof.
    \end{proof}

    \subsection{Wrapping up the Proof of Theorem~\ref{thm:reduction}}\label{sec:Proof}
    
The claimed interactive oracle algorithm $\Bcal$ now works in the obvious way. 
On input ${\bf q}$ (provided in unary) and for any $\Acal$, $\Bcal[\Acal]$ will make static oracle queries to $\Ocal_{s_1},\Ocal_{s_2},\ldots,\Ocal_{s_{nQ}}$, where $s = (s_1,\dots,s_{nQ}) \in \{1,\ldots,n\}^*$ is the string promised to exist by Lemma~\ref{lem:string}, with $Q = q_1 + \cdots + q_n$. 
In more detail, it first computes $s$ using the algorithm from Lemma~\ref{lem:efficienct_s}. 
Then, for the $i$-th oracle query that $\Bcal$ receives from $\Acal$ (starting with $i = 1$), and which consists of the identifier $s'_i  \in \{1,\ldots,n\}$ of which oracle to query now and of the actual input to the oracle $\Ocal_{s'_i}$, the algorithm $\Bcal$ does the following: it computes $j_i\gets \Ecal(s,s'_i,j_{i-1})$ using the algorithm from Lemma~\ref{lem:online_emb}, makes dummy queries to $\Ocal_{s_{j_{i-1}+1}},\ldots,\Ocal_{s_{j_i-1}}$, and forwards $\Acal$'s query input to $\Ocal_{s_{j_i}} = \Ocal_{s'_i}$. 
The fact that $(j_1,\ldots,j_Q)$ computed this way forms an embedding of $s' = (s'_1,\ldots,s'_Q)$ into $s$ ensures that $\Bcal$ is able to forward all the queries that $\Acal$ makes to the right oracle, and so $\Acal$ will produce its output as in an ordinary run with direct adaptive access to the oracles.

%    We describe the algorithm $\Bcal$ as follows.
%        
%    $\Bcal(1^\lambda)$ first computes the corresponding query bound ${\bf q}_\lambda=(q_1,\dots,q_n)$. Next, by Lemma~\ref{lem:efficienct_s}, $\Bcal$ computes the string $s=(s_1,\dots,s_{nQ})$ of characteristic $n\cdot{\bf q}$ as specified in Lemma~\ref{lem:string} where the alphabet $\Sigma$ is set to $[n]$ and $Q=\sum_{i\in[n]}q_i$. This phase of the algorithm is polynomial-time and free of queries.
%
%    Next, $\Bcal$ produces a static algorithm $\Bcal^{\Ocal_1,\dots,\Ocal_n}_{s_1,\dots,s_Q}$. This algorithm recieves and is asked to answer the oracle queries made by $\Acal$ as follows.
%    
%    Let $s':=(s'_1,\dots,s'_Q)\subseteq [n]$ be such that $\Acal(1^\lambda)$ makes its $i$-th query to $\Ocal_{s'_i}$ for each $i\in[Q]$. From Lemma~\ref{lem:online_emb} the online algorithm ${\sf emb}$ produces $j_1<\dots<j_Q$ on-the-fly such that $s'=(s_{j_1},\dots,s_{j_Q})$. When each of the $i$-th query of $\Acal$ occurs, the simulator $\Bcal$ would have known $j_1,\dots,j_{i-1}$ and be able to compute $j_i$ efficiently. $\Bcal$ then performs dummy queries from the $(j_{i-1}+1)$-th to the $(j_i-1)$-th queries, and sends back the answer of the $j_i$-th query, with convention that $j_0=0$. This phase perfectly answers the queries from $\Acal$, is polynomial-time and $(n\cdot{\bf q})$-query because $s$ is of characteristic $n\cdot{\bf q}$.
%
%    Putting things together, the proof is concluded.
%    \qedsymbol

    \subsection{Applications}
    
    To demonstrate the usefulness of our adaptive-to-static compiler, we briefly discuss three results from the literature. For two of them, the adaptivity of the attacker was explicitly declared as an obstacle in the security proof, and dealing with it complicated the proof substantially. These complications could be avoided/removed by means of our adaptive-to-static compiler. For the third one, we can immediately strengthen one of the results, which is restricted to hold for static multi-oracle adversaries, by dropping this restriction via our compiler.

%    a main bound is proven only for those static adversary, and naively adopting this to the general case suffers a security loss. By using our compiler, the security loss can also be avoided asymptotically.

    %To demonstrate the usefulness of our adaptive-to-static compiler, we briefly discuss two results from the literature, where the adaptivity of the attacker was explicitly declared as an obstacle in the security proof, and where dealing with it complicated the proof substantially.  In both cases, these complications could be avoided/removed by means of our adaptive-to-static compiler. 

 	\subsubsection{Quantum Security of qTESLA.}
 	Our first application is in the context of qTESLA \cite{qTesla}, which is a signature scheme that made it into the second round of the NIST post-quantum competition. Its security is based on the Ring-LWE problem, to which the authors of \cite{RevqTesla} give a reduction in the quantum random-oracle model (QROM).%
	\footnote{We note that some versions of qTESLA have been broken \cite{LS19}, but the attack only applies to an optimized variant that was developed for the NIST-competition, and does not apply to the scheme in \cite{RevqTesla} that we discuss here.} 
	In the reduction, which starts from the security notion of {\em Unforgeability under Chosen Message Attack} (UF-CMA), the adversary can query a random-oracle $H$ as well as a signing oracle, where the order of oracle queries may be adaptive.
 	
 	The reduction strategy of \cite{RevqTesla} applies only to a static adversary, with a fixed query pattern. Thus, the authors first compile the adaptive into a naive static attacker by letting it do $q_H$ (the number of $H$-queries of the original adaptive adversary) $H$-queries between any two signing queries. Leaving it with this would blow up the number of $H$-queries to $q_S q_H$. In order to avoid that, they give the attacker a ``{\em live-switch}'', meaning that each query to $H$ may be in superposition of making the query and not making the query, and the total ``{\em query magnitude}'' on actual $H$-queries is still restricted to $q_H$. Not so surprising, adding even more ``quantumness'' to the problem in this way, makes the analysis more complicated (compared to using standard ``all-or-nothing'' static queries and a standard classical bound on the query complexity), but it allows the authors to avoid the above blow-up in the (classical) query complexity to transpire into the security loss. 
	 The overall loss they obtain in the end is 
	 $
 	O((q_Sq_H^2 + q_S^3 + q_S^2q_H)\cdot \epsilon)
 	$
 	for small $\epsilon$ determined by the parameters of the scheme.
 	
 	Since the security reduction in \cite{RevqTesla} intertwines the adaptive to static hurdle with other aspects of the proof, we cannot simply insert our Theorem \ref{thm:reduction} and then continue the proof as is. Still, by applying our result, we could obtain a static adversary with almost no cost in the number of $H$-queries, avoiding the need for the rather complicated ``live-switch superposition'' attacker, thus simplifying the overall proof significantly. Furthermore, looking ahead at Section~\ref{sec:skPRF}, our result allows us to obtain the much better $O(\sqrt{q_Oq_H^2\epsilon} + \sqrt{q_O^2q_H\epsilon})$ loss in a similar context\,---\,similar in the sense that it also involves two oracles where one reprograms the other at some high-entropy input. The adaptive to static reduction there allows us to apply some additional QROM tools that could potentially also be applied in the setting of qTESLA to improve the bound. However, actually doing this would require us to rewrite the entire proof of \cite{RevqTesla}, which we consider outside the scope of this work.

    \subsubsection{Quantum Security of the Function FX}
     Our second application is to \cite{JST21}, where the post-quantum security of the FX key-length extension is studied (which is a generalization of the Even-Mansour cipher). In a first part, post-quantum security of FX is shown under the restriction that the  inputs to the queries are fixed in advance. In a second part, towards avoiding this restriction, the authors consider a variation of the FX construction, which they call FFX (for ``function~FX''), and they show in their Theorem~3 post-quantum security of FFX under the restriction that the attacker is ``{\em order consistent}'', as they call it in \cite{JST21}, which is precisely our notion of a {\em static} multi-oracle algorithm. 
     Thus, by a direct application of our Theorem~\ref{thm:reduction}, this restriction can be dropped (almost) for free, i.e., with a small constant blow-up on the attackers advantage.

%    In \cite{JST21}, a so-called function FX (FFX) is portraited as a variant of the Evan-Mansour cipher. The work analyzed quantum security of a non-decryptable cipher ${\sf FFX}[{\sf F}](K,K_2,M)$, which takes in $K,K_2$ as secret keys, $M$ as the message, and produce the output ${\sf F}_K(K_2\oplus M)$ where ${\sf F}$ is modelled as a keyed random function. Eventually, in \cite[Theorem~3]{JST21} it proves such cipher is pseudorandom against a class of restricted adversary said to be ``order-consistent,'' which is a synonym of being ``static'' in our paper.
%
%    Indeed, by pre-composing our generic compiler in Theorem~\ref{thm:reduction}, this result can easily be lifted for any adaptive adversary, with only a constant blow-up on its error probability.

    \subsubsection{Quantum Security of the Even-Mansour Cipher.}
    The recent work \cite{ABKM21} shows full post-quantum security of the (unmodified) Even-Mansour cipher. %, which turns a public permutation into a keyed permutation, and it proves it secure in the post-quantum setting where an attacker has quantum superposition access to the public permutation, but only classical access to the keyed cipher.
    Is in the case of qTESLA, 
%    Also here, 
    the fact that the attacker can choose adaptively whether to query the public permutation of the cipher complicates the proof. Indeed, as is explained on page~3 in \cite{ABKM21}, this adaptivity issue forces the authors to extend the blinding lemma of Alagic {\em et al.} to a variant that gives a bound in terms of the \emph{expected} number of queries. While the authors succeed in providing such an extended version of the blinding lemma (Lemma~3 in \cite{ABKM21}), it further increases the complexity of an already involved proof.%
    \footnote{To be fully precise, Lemma~3 in \cite{ABKM21} also generalizes the original blinding lemma in a different direction by allowing to reprogram to an arbitrary value instead of a uniformly random one; however, this generalization comes for free in that the original proof still applies up to obvious changes, while allowing an expected number of queries, which is needed to deal with the adaptivity issue, requires a new proof. }
    
    Thus, again, our Theorem \ref{thm:reduction} could be used to simplify the given proof by bypassing the complications that arise due to the attacker choosing adaptively which oracle to query at what point.

%    The proof crucially relies on two reprogramming lemmas. The first one (Lemma~1) generalizes a theorem from a previous work (Thm.~11 in \cite{AMRS20}) in two directions: It (1) allows the reduction to reprogram to an arbitrary value instead of a uniformly random one, and (2) it gives a bound in terms of the \emph{expected} number of queries that the adversary makes, rather than an upper bound. It is the latter generalization that introduces some intricacies in the proof of the lemma, because it forces the authors to work in the so called `density matrix formalism' as opposed to the `pure state formalism' of quantum information. The density matrix formalism is less well-known by non-quantum-experts in the cryptographic community, and generally harder to understand and more involved to work with.
%    
%    Thus, again without being able to substitute our result into the proof in a modular fashion, applying Theorem \ref{thm:reduction} nevertheless makes it possible to give a proof of (1) in the pure state formalism, thereby greatly simplifying the overall exposition and making accessible to wider audience. The obtained bound, as far as we could informally check, would remain nearly the same in this case.

    \section{Quantum Security of a Split-key PRF}\label{sec:skPRF}
    
    \subsection{Hybrid Security and skPRFs}

    A {\em split-key pseudorandom function} (skPRF), as introduced in \cite{giacon2018kem}, is a polynomial-time computable function  $F:\Kcal_1\times\dots\times\Kcal_n\times\Xcal\to\Ycal$ that is a pseudorandom function (PRF) in the standard sense {\em for every $i\in[n]$} when considered as a keyed function with key space $\Kcal_i$ and message space $\Kcal_1\times\dots\times\Kcal_{i-1}\times \Kcal_{i+1}\times\dots\times\Kcal_{n}\times\Xcal$, with the additional restriction that the distinguisher $\Acal$ (in the standard PRF security definition) must use a fresh $x \in \Xcal$ in every query $(k_1,\ldots,k_{i-1},k_{i+1},\ldots,k_n,x)$.

This restriction on the PRF distinguisher may look artificial, but is motivated by this definition of a skPRF being good enough for the intended purpose of a skPRF, namely to give rise to a {\em secure KEM combiner}. 
Indeed, \cite{giacon2018kem} shows that the naturally combined KEM, obtained by  
%    Such a split-key pseudorandom function is shown in \cite{giacon2018kem} to be useful in the context of constructing {\em KEM combiners}, which work as a generic methodology that transforms multiple KEM's into a combined one. The combined scheme is guranteed to be secure (IND-CCA) assuming either one of its component schemes is secure (IND-CCA). Consider the following combiner that 
    concatenating the individual ciphertexts to $C=(c_1,\dots,c_n)$, and combining the individual session keys $k_1,\dots,k_n$ using the above mentioned skPRF as 
    $$
    K = F(k_1,\dots,k_n,C) \, ,
    $$
    is IND-CCA secure if at least one of the individual KEM's is IND-CCA secure. 
    
    The paper~\cite{giacon2018kem} also proposes a particularly efficient hash-based construction, given by 
    \begin{equation}\label{eq:F}
    F(k_1,\dots,k_n,x):=H(g(k_1,\dots,k_n),x)
    \end{equation}
    where $g:\Kcal_1\times\dots\times\Kcal_n\to\Wcal$ is a polynomial-time mapping with the property that, for some small $\epsilon$, 
        \begin{equation}\label{eq:skPRF_high_min_entropy}
        \Pr_{k_i\gets\Kcal_i}\left[g(k_1,\dots,k_n)=w\right]\leq\epsilon\;,
    \end{equation}
    for every $i \in [n]$ and for every $k_1,\dots,k_{i-1},k_{i+1},\ldots,k_n$ and every $w$; furthermore, $H:\Wcal\to\Ycal$ is a cryptographic hash function. Simple choices for the function $g$ are $g(k_1,\dots,k_n) = (k_1,\dots,k_n)$ and $g(k_1,\dots,k_n) = k_1 + \cdots  + k_n$. 
    
     It is shown in \cite{giacon2018kem} that this construction is a skPRF when $H$ is modelled as a random-oracle; indeed, it is shown that the distinguishing advantage is upper-bounded by $q_H\epsilon$, where $q_H$ is the number of queries to the random-oracle~$H$. 
     
     Given the natural use of combiners in the context of the upcoming transition to post-quantum cryptography, it is natural\,---\,and well-motivated\,---\,to ask whether $F$ can be proven to be a skPRF in the presence of a {\em quantum attacker}, i.e., when $H$ is modeled as a {\em quantum} random-oracle. Below, we answer this in the affirmative.

    \subsection{Quantum-security of the skPRF}\phantom{\cite{}}
    The goal of this section is to show the security of the skPRF (\ref{eq:F}) in the quantum random-oracle model. 
    In essence, this requires proving that $F$ is a PRF (in the quantum random-oracle model) with respect to {\em any} of the $k_i$'s being the key, subject to the restriction of asking a fresh $x$ in each query. 
    
    To simplify the notation, we fix the index $i \in [n]$ and simply write $k$ for $k_i$ and $x$ for $(k_1,\dots,k_{i-1},k_{i+1},\ldots,k_n,x)$, and we abstract away the properties of the function $g$ as follows. We let $$F(k,x) := H(h(k,x)) \, ,$$ where $h:\Kcal\times\Xcal\to\Wcal$ is an arbitrary function with the property that, for some parameter $\epsilon>0$,  
    \begin{equation}
        \Pr_{k\gets\Kcal}\left[h(k,x)=w\right]\leq\epsilon\; \label{eq:high_min_entropy} 
    \end{equation}
    for all $w\in\Wcal$ and $x\in\Xcal$. 
    Furthermore, in the PRF security game, we restrict the attacker/distinguisher $\Acal$ to queries $x$ with a fresh value of $h(k,x)$, no matter what $k$ is. 
    
    More formally, let $\Acal^{\Hcal,\Ocal}$ be an arbitrary quantum oracle algorithm, making quantum superposition queries to an oracle $\Hcal$ and classical queries to another oracle $\Ocal$, with the restriction that for every query $x$ to $\Ocal$ it holds that 
    \begin{equation}
        h(\kappa,x) \neq h(\kappa,x') \;,\label{eq:restrict_query}
    \end{equation}
    for any prior query $x'$ to $\Ocal$ and all $\kappa\in\Kcal$. 
     For any such oracle algorithm $\Acal^{\Hcal,\Ocal}$, we consider the standard PRF security games 
     $$
     {\sf PR}^1 := \Acal^{H,F} \qquad\text{and}\qquad {\sf PR}^0:=\Acal^{H,R} \, ,
     $$ 
     obtained by instantiating $\Hcal$ with a random function $H$ (the random-oracle) in both games, and in one game we instantiate $\Ocal$ with the pseudorandom function $F$, which we understand to return $F(k,x)$ on query $x$ for a random $k \gets\Kcal$, chosen once and for all queries, and in the other we instantiate $\Ocal$ with a truly random function $R$ instead. 
     
    We show that the distinguishing advantage for these two games is bounded as follows.

    \begin{theorem}\label{thm:pseudorandom_adaptive}
    Let $\Acal^{\Hcal,\Ocal}$ be a $(q_\Hcal,q_\Ocal)$-query oracle algorithm satisfying (\ref{eq:restrict_query}). Then
    $$
    \left|\Pr\left[1\gets {\sf PR}^1\right]
    - \Pr\left[1\gets {\sf PR}^{0}\right]\right|
    \leq 4 \sqrt{2q_\Ocal^2q_\Hcal\epsilon} + 4 \sqrt{2q_\Hcal^2q_\Ocal\epsilon}\;.
    $$
    \end{theorem}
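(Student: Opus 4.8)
The plan is to remove the adaptivity of the distinguisher first, by applying Theorem~\ref{thm:reduction} with $n=2$, and then to run a hybrid over the $\Ocal$-queries whose per-step loss stays small precisely because, for a \emph{static} distinguisher, the ``windows'' of $\Hcal$-queries sitting between consecutive $\Ocal$-queries are fixed numbers. Concretely, I would feed $\Acal$ and ${\bf q}=(q_\Hcal,q_\Ocal)$ to the compiler $\Bcal$ of Theorem~\ref{thm:reduction}, obtaining a functionally equivalent \emph{static} algorithm $\bar\Acal:=\Bcal[\Acal](1^{\bf q})$ that makes at most $\bar q_\Hcal\le 2q_\Hcal$ superposition queries to $\Hcal$ and at most $\bar q_\Ocal\le 2q_\Ocal$ classical queries to $\Ocal$, in a pattern that depends only on ${\bf q}$. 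This is legitimate: the choice of which oracle is queried is classical, and for any fixed key $k$ and any fixed random-oracle instance, $H$, $F(k,\cdot)$ and $R$ are stateless. The dummy $\Ocal$-queries inserted by $\Bcal$ repeat inputs and hence do not satisfy~(\ref{eq:restrict_query}), but their responses are discarded, so the analysis below only ever invokes~(\ref{eq:restrict_query}) for $\bar\Acal$'s ``genuine'' $\Ocal$-queries, which it inherits from $\Acal$. After this step it suffices to bound the distinguishing advantage of $\bar\Acal$ as a function of $\bar q_\Hcal,\bar q_\Ocal$; the compiler's factor $2$ is exactly what turns a bound of the shape $\sqrt{\bar q_\Ocal^2\bar q_\Hcal\,\epsilon}+\sqrt{\bar q_\Hcal^2\bar q_\Ocal\,\epsilon}$ into the one claimed.

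For the static $\bar\Acal$, index its $\Ocal$-queries $1,\dots,\bar q_\Ocal$ (at fixed positions in the pattern), and let $\ell_m$ be the fixed number of $\Hcal$-queries between the $m$-th and the $(m+1)$-th $\Ocal$-query, so that $\sum_m\ell_m\le\bar q_\Hcal$. Define hybrids $G_0,\dots,G_{\bar q_\Ocal}$ where in $G_m$ the first $m$ $\Ocal$-queries are answered by lazily sampled fresh randomness and the remaining ones by $F(k,\cdot)=H(h(k,\cdot))$ for a single $k\gets\Kcal$. Then $G_0={\sf PR}^1$ and $G_{\bar q_\Ocal}={\sf PR}^0$ — the latter because, by~(\ref{eq:restrict_query}), $\bar\Acal$'s genuine $\Ocal$-inputs are pairwise distinct, so lazy sampling is distributed as a fixed random function $R$ — and therefore $|\Pr[1\gets{\sf PR}^1]-\Pr[1\gets{\sf PR}^0]|\le\sum_{m=1}^{\bar q_\Ocal}|\Pr[1\gets G_{m-1}]-\Pr[1\gets G_m]|$.

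To bound the $m$-th term I would use a \emph{localized} reprogramming / one-way-to-hiding argument. Going from $G_{m-1}$ to $G_m$ only changes the answer to the $m$-th $\Ocal$-query from $H(w_m)$ to a fresh uniform value, where $w_m:=h(k,x_m)$ for its input $x_m$ (for a dummy $m$-th query both answers are discarded and the step is free); I realize this by reprogramming $H$ at the single point $w_m$ to a fresh value, but only for the duration of the $m$-th window, that is, during exactly the $\ell_m$ fixed $\Hcal$-slots following the $m$-th $\Ocal$-query and when computing that $\Ocal$-answer, leaving $H$ untouched elsewhere. Two observations make this sound and cheap. First, by~(\ref{eq:restrict_query}) every later genuine $\Ocal$-answer is $H$ evaluated at a point $\neq w_m$, hence unaffected by the reprogramming — which is exactly how $G_m$ is recovered on the other side. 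Second, by the min-entropy bound~(\ref{eq:high_min_entropy}), and since at the start of the $m$-th window $\bar\Acal$'s view carries no information about $k$ (the earlier $\Ocal$-answers are fresh randomness, and the value $H(w_m)$ it has just received is, $H$ being a random oracle, uniform and independent of $w_m$), the point $w_m$ is $\epsilon$-unpredictable. A reprogramming/O2H lemma then bounds the $m$-th step by something of order $\sqrt{\ell_m\,(\bar q_\Hcal+\bar q_\Ocal)\,\epsilon}$: its query-count factor is the \emph{total} number of $H$-queries (the direct ones and those hidden inside $F$-evaluations), but the ``finding probability'' factor only sees the $\ell_m$ hash queries of the $m$-th window, because the reprogramming is live only then. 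Summing over $m$ and using the Cauchy--Schwarz inequality $\sum_m\sqrt{\ell_m}\le\sqrt{\bar q_\Ocal\sum_m\ell_m}\le\sqrt{\bar q_\Ocal\bar q_\Hcal}$ yields $O\!\big(\sqrt{\bar q_\Ocal\bar q_\Hcal(\bar q_\Hcal+\bar q_\Ocal)\,\epsilon}\big)$, which by $\sqrt{a+b}\le\sqrt a+\sqrt b$ splits into the two terms $O\!\big(\sqrt{\bar q_\Hcal^2\bar q_\Ocal\,\epsilon}\big)$ and $O\!\big(\sqrt{\bar q_\Ocal^2\bar q_\Hcal\,\epsilon}\big)$; substituting $\bar q_\Hcal\le 2q_\Hcal$, $\bar q_\Ocal\le 2q_\Ocal$ and tracking the constants gives $4\sqrt{2q_\Ocal^2q_\Hcal\epsilon}+4\sqrt{2q_\Hcal^2q_\Ocal\epsilon}$.

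The step I expect to be the main obstacle is exactly this per-step reprogramming, and it is also where the compiler pays off. Keeping the reprogramming \emph{localized} to the $m$-th window is what keeps the per-step loss proportional to the small fixed $\ell_m$ (with $\sum_m\ell_m\le\bar q_\Hcal$) rather than to $\bar q_\Hcal$ itself: for an adaptive $\bar\Acal$ the window boundaries are not known in advance, forcing one to keep $w_m$ reprogrammed from the $m$-th $\Ocal$-query until the end of the run, so that each step costs a factor $\sqrt{\bar q_\Hcal\,\epsilon}$ and the sum is inflated by $\bar q_\Ocal$. One must also argue that the $k$-dependent information $\bar\Acal$ keeps receiving \emph{after} the $m$-th window — the later $\Ocal$-answers $F(k,x_i)=H(h(k,x_i))$ for $i>m$, together with its hash queries — does not erode the $\epsilon$-unpredictability of $w_m$; this again rests on~(\ref{eq:restrict_query}), which places all those evaluations away from $w_m$, and on the randomness of $H$. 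Making the various $2$'s and $4$'s collapse to the clean constant $4\sqrt2$ after the compiler's factor-$2$ blow-up is then bookkeeping rather than a conceptual difficulty.
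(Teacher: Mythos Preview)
Your high-level plan is the paper's: invoke Theorem~\ref{thm:reduction} to make the distinguisher static with a factor-$2$ blow-up, then run a hybrid over the (now fixed) $\Ocal$-query positions and exploit that the window sizes $\ell_m$ are fixed numbers summing to the hash budget. Where the proposal breaks down is the per-step analysis with your particular choice of hybrids.

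In your $G_m$ the hash oracle stays the unreprogrammed $H$ throughout; only the first $m$ $\Ocal$-answers are replaced by fresh randomness. The ``localized'' reprogramming you sketch does not give a per-step bound of the shape $\sqrt{\ell_m(\bar q_\Hcal+\bar q_\Ocal)\epsilon}$\,---\,no standard O2H/reprogramming lemma has this product form\,---\,and two leaks are left uncontrolled. First, hash queries \emph{before} the $m$-th $\Ocal$-query may already sit on $w_m$; the adversary then holds $H(w_m)$ and can compare it with the reprogrammed $\Ocal$-answer. You note that $k$ is fresh at that point, but that only tells you this bad event has probability at most $(\sum_{j< m}\ell_j)\epsilon$; it does not make it vanish, and it is not captured by your $\ell_m$-only ``finding probability''. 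This is precisely the cost the paper isolates as Lemma~\ref{lem:hybrid_F2R}, proved via the compressed oracle, and which alone already sums to the $\sqrt{q_\Ocal^2 q_\Hcal\epsilon}$ term. Second, and more seriously, \emph{after} the window your hash oracle reverts to the original $H$, so a later hash query at $w_m$ again exposes the swap; but by then the adversary has received the later $F$-answers $H(h(k,x_i))$ for $i>m$, which depend on $k$. Your appeal to~(\ref{eq:restrict_query}) only says those later evaluations are at points $\neq w_m$; it does \emph{not} say they carry no information about $k$, and hence about $w_m=h(k,x_m)$ (indeed, they give the adversary a way to \emph{verify} guesses for $k$ via $H$). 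So the assertion that ``$w_m$ remains $\epsilon$-unpredictable after the window'' is unsupported, and without it the claim that only $\ell_m$ queries matter collapses.

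The paper eliminates the second leak by a different hybrid design: once $i$ of the $\Ocal$-queries have been switched to $R$, all subsequent hash queries are answered not by $H$ but by $H'$, which is $H$ reprogrammed at every point $h(k,x_j)$, $j\le i$, to the corresponding $R$-value. This keeps the hash oracle permanently consistent with the already-randomised $\Ocal$-answers, so there is no ``after the window'' inconsistency to worry about. Two transitions then remain, each analysed at a moment when $k$ is still unused: swap one more $F$ for $R$ and simultaneously extend the reprogramming of $H'$ by one point (Lemma~\ref{lem:hybrid_F2R}, cost $2\sqrt{\sum_{j\le i}q^\Hcal_j\,\epsilon}$), and then peel $H'$ back to $H$ on just the next window's $q^\Hcal_{i+1}$ hash queries (Lemma~\ref{lem:hybrid_unreprogram}, cost $2\,q^\Hcal_{i+1}\sqrt{q_\Ocal\epsilon}$). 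Summing these two families over $i$ gives the two square-root terms; the extra $2\sqrt2$ in each constant is the compiler's $q_\Hcal\mapsto 2q_\Hcal$, $q_\Ocal\mapsto 2q_\Ocal$.
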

    
    We can now apply Theorem~\ref{thm:pseudorandom_adaptive} to the function $h(k,x):=(g(k_1,\dots,k_n),\tilde x)$, where $k:=k_i$ and $x:=(k_1,\dots,k_{i-1},k_{i+1},\dots,k_n,\tilde x)$. Indeed, the condition  (\ref{eq:skPRF_high_min_entropy}) on $g$ implies the corresponding condition (\ref{eq:restrict_query}) on $h$, and the restriction on $\tilde x$ being fresh in the original skPRF definition implies the above restriction on $h(k,x)$ being fresh no matter what $k$ is, i.e.,~\ref{eq:high_min_entropy}). 
    Thus, we obtain the following. 
    
\begin{corollary}
For any function $g$ satisfying (\ref{eq:skPRF_high_min_entropy}) for a given $\epsilon > 0$, the function $F(k_1,\dots,k_n, x) := H(g(k_1,\dots,k_n), x)$ is a skPRF in the quantum random-oracle model with distinguishing advantage at most $4 \sqrt{2q_\Ocal^2q_\Hcal\epsilon} + 4 \sqrt{2q_\Hcal^2q_\Ocal\epsilon}$. 
\end{corollary}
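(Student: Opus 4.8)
The plan is to derive the Corollary as an essentially immediate consequence of Theorem~\ref{thm:pseudorandom_adaptive}, applied separately for each coordinate $i \in [n]$. Fix $i$ and adopt the notational convention already introduced in the excerpt: write $k := k_i$ for the key and $x := (k_1,\dots,k_{i-1},k_{i+1},\dots,k_n,\tilde x)$ for the ``message'' in the skPRF game, and set $h(k,x) := (g(k_1,\dots,k_n),\tilde x)$. The first step is to check that $h$ satisfies the min-entropy condition~(\ref{eq:high_min_entropy}) required by Theorem~\ref{thm:pseudorandom_adaptive}: for any target $w = (w_1, w_2) \in \Wcal \times \Xcal$ and any fixed $x$, the event $h(k,x) = w$ entails in particular $g(k_1,\dots,k_n) = w_1$ with the coordinates other than $k_i$ fixed by $x$, so $\Pr_{k \gets \Kcal_i}[h(k,x) = w] \leq \Pr_{k_i \gets \Kcal_i}[g(k_1,\dots,k_n) = w_1] \leq \epsilon$ by~(\ref{eq:skPRF_high_min_entropy}). (If $w_2 \neq \tilde x$ the probability is simply $0$.) The second step is to check the freshness restriction~(\ref{eq:restrict_query}): the skPRF distinguisher is required to use a fresh $\tilde x$ in every query, and since the $\tilde x$-component of $h(k,x)$ is exactly $\tilde x$ itself, two queries with distinct $\tilde x$ automatically give distinct values of $h(\kappa,x)$ for every $\kappa \in \Kcal_i$ — so the restriction~(\ref{eq:restrict_query}) is implied, with no dependence on $\kappa$.

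With these two checks in place, Theorem~\ref{thm:pseudorandom_adaptive} applies verbatim to the function $F(k_1,\dots,k_n,\tilde x) = H(h(k_i, x)) = H(g(k_1,\dots,k_n), \tilde x)$ viewed as a keyed function with key $k_i$, and yields that any $(q_\Hcal, q_\Ocal)$-query distinguisher has advantage at most $4\sqrt{2 q_\Ocal^2 q_\Hcal \epsilon} + 4\sqrt{2 q_\Hcal^2 q_\Ocal \epsilon}$ in telling $\Acal^{H,F}$ from $\Acal^{H,R}$. Since this holds for each $i \in [n]$, the function $F$ meets the definition of a skPRF in the quantum random-oracle model, with the stated distinguishing-advantage bound. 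One should also remark that $F$ is polynomial-time computable (given one oracle call to $H$), as $g$ is assumed polynomial-time, so $F$ qualifies as a skPRF in the sense of the definition and not merely satisfies the advantage bound.

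There is essentially no obstacle here beyond bookkeeping; the work has already been done in Theorem~\ref{thm:pseudorandom_adaptive}. The only point that deserves a sentence of care is the direction of the two implications in the reduction from $g$ to $h$: namely that~(\ref{eq:skPRF_high_min_entropy}) is at least as strong as what~(\ref{eq:high_min_entropy}) demands of $h$ (because fixing the other coordinates of $g$ can only decrease the collision probability, and appending the deterministic component $\tilde x$ does not increase it), and that the ``fresh $\tilde x$'' constraint of the skPRF definition is at least as strong as the constraint~(\ref{eq:restrict_query}) that the theorem imposes on queries to $\Ocal$. Both implications go the favourable way, so the Corollary follows directly.
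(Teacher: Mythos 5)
Your proposal is correct and matches the paper's own (very brief) derivation: the paper likewise applies Theorem~\ref{thm:pseudorandom_adaptive} to $h(k,x):=(g(k_1,\dots,k_n),\tilde x)$ with $k:=k_i$, noting that (\ref{eq:skPRF_high_min_entropy}) yields the min-entropy condition on $h$ and that freshness of $\tilde x$ yields the query restriction (\ref{eq:restrict_query}). Your write-up just spells out these two checks in more detail than the paper does.
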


    \subsection{Proof of Theorem~\ref{thm:pseudorandom_adaptive}}

    \begin{proof}[Proof (of Theorem~\ref{thm:pseudorandom_adaptive})]
        Let $\Acal^{{\Hcal},{\Ocal}}$ be an oracle algorithm as considered in the previous subsection. Thanks to Theorem~\ref{thm:reduction}, taking a factor-$2$ blow-up in the query complexity into account, we may assume $\Acal$ to be a {\em static} $(q_\Hcal,q_\Ocal)$-query oracle algorithm. 
        It will be convenient to write such a static algorithm as 
        $$
        \Acal^{[\mathbfcal{H}_0 \Ocal \mathbfcal{H}_1 \Ocal \mathbfcal{H}_2 \dots \Ocal \mathbfcal{H}_{q_\Ocal}]} \, ,
        $$ 
        where each block $\mathbfcal{H}_i = \Hcal \cdots \Hcal$ consists of a (possibly empty) sequence of symbols $\Hcal$ of length $q^\Hcal_{i} = |\mathbfcal{H}_i|$, and with the understanding that $\Acal$ first makes $q^\Hcal_{0}$ queries to $\Hcal$, then a query to $\Ocal$, then $q^\Hcal_{1}$ queries to $\Hcal$, etc., where, obviously, $q^\Hcal_{0} + \cdots + q^\Hcal_{q_\Ocal} = q_\Hcal$ then. Instantiating $\Hcal$ with $H$, and $\Ocal$ with $F$ and $R$, respectively, we can then write
        $$
        {\sf PR}^0 = \Acal^{[{\bf H}_0R{\bf H}_1\dots R{\bf H}_{q_\Ocal}]}
        \qquad\text{and}\qquad
        {\sf PR}^1 = \Acal^{[{\bf H}_0F{\bf H}_1\dots F{\bf H}_{q_\Ocal}]} \, .
        $$
        For the proof, we introduce certain hybrid games. For this purpose, we introduce the following alternative ({\em stateful} and $R$-dependent) instantiation $H'$ of~$\Hcal$. To start with, $H'$ is set to be equal to $H$, but whenever $R$ is queried on some input~$x$, $H'$ is {\em reprogrammed} at the point $h(k,x)$ to the value $H'(h(k,x)):=R(x)$. 
    For any $i$, we now define the two hybrid games
    \begin{align*}
        {\sf PR}^2_{i}
        :=&\Acal^{[{\bf H}_0 R\dots R{\bf H}_{i}{\color{red} F}{\bf H}_{i+1}'F\dots F{\bf H}_{q_\Ocal}']}\\
        \widetilde{\sf PR}^2_{i}
        :=&\Acal^{[{\bf H}_0 R\dots R{\bf H}_{i}{\color{red} R}{\color{blue}{\bf H}_{i+1}'}F\dots F{\bf H}_{q_\Ocal}']}
%         {\sf PR}^2_{i+1}
%         =&\Acal^{[{\bf H}_0 R\dots R{\bf H}_{i} R{\color{blue}{\bf H}_{i+1}} F\dots F{\bf H}_{q_\Ocal}']}\;,
    \end{align*}
    and also spell out    
    \begin{align*}
        {\sf PR}^2_{i+1}
        =&\Acal^{[{\bf H}_0 R\dots R{\bf H}_{i} R{\color{blue}{\bf H}_{i+1}} F\dots F{\bf H}_{q_\Ocal}']} 
    \end{align*}
to emphasize its relation to $\widetilde{\sf PR}^2_{i}$. We note that in all of the above, the first occurrences of $\Hcal$ and $\Ocal$ are instantiated with $R$ and $H$, respectively, but at some point we switch to $R$ and $H'$ instead. 
 
 The extreme cases match up the games we are interested in. Indeed, 
    $$
    {\sf PR}^2_0 = \Acal^{[{\bf H}_0 F {\bf H}_1'\dots F{\bf H}_{q_\Ocal}']} = \Acal^{[{\bf H}_0 F {\bf H}_1\dots F{\bf H}_{q_\Ocal}]} = {\sf PR}^1 \, ,
    $$
    where we exploit that there are no queries to $R$ and thus $H'$ remains equal to $H$, and, by definition, 
    $$
    {\sf PR}^2_{q_\Ocal} =\Acal^{[{\bf H}_0R{\bf H}_1\dots R{\bf H}_{q_\Ocal}]} = {\sf PR}^0\;.
    $$
    Our goal is to prove the closeness of the following games
    $$
    {\sf PR}^1
    = {\sf PR}^2_0
    \approx \widetilde{{\sf PR}^{2}_0}
    \approx {\sf PR}^{2}_1
    \dots
    \approx{\sf PR}^2_{q_\Ocal-1}
    \approx \widetilde{\sf PR}^{2}_{q_\Ocal-1}
    \approx {\sf PR}^{2}_{q_\Ocal}
    = {\sf PR}^0\;. 
    $$
    We do this by means of applying Lemma~\ref{lem:hybrid_F2R} and~\ref{lem:hybrid_unreprogram}, which we state here and prove further down.     
    
    \begin{lemma}\label{lem:hybrid_F2R}
        For each $0\leq i<q_\Ocal$,
        \begin{align*}
            \left|\Pr\Bigl[1\gets{\sf PR}^2_i\Bigr]
            - \Pr\Bigl[1\gets\widetilde{\sf PR}^2_i\Bigr]\right|
            \leq 2\sqrt{\sum_{1\leq j\leq i}q^\Hcal_j \epsilon}\;.
        \end{align*}
    \end{lemma}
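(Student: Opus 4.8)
The plan is to rewrite the step from ${\sf PR}^2_i$ to $\widetilde{\sf PR}^2_i$ as a \emph{single} reprogramming of the random oracle at a point of high min-entropy, and then to invoke a quantum adaptive-reprogramming bound. To set this up, first observe that the two games run identically up to and including block ${\bf H}_i$ and the first $i$ queries to $\Ocal$: in both games the blocks ${\bf H}_0,\dots,{\bf H}_i$ are answered by $H$, the first $i$ $\Ocal$-queries $x_1,\dots,x_i$ are answered by the fresh random function $R$, and hence in both games $H'$ has, at the moment the $(i+1)$-st $\Ocal$-query is made, been reprogrammed at exactly the $i$ points $h(k,x_1),\dots,h(k,x_i)$. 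Consequently one can couple the two executions so that the joint state of $\Acal$, of the sampled $H,R,k$, and of the transcript so far is one and the same random variable $\rho$ in both games at that point.

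Next I would isolate the sole remaining difference. Write $w:=h(k,x_{i+1})$ for the $(i+1)$-st (possibly adaptively chosen) $\Ocal$-input $x_{i+1}$; by the freshness restriction~(\ref{eq:restrict_query}) we have $w\notin\{h(k,x_1),\dots,h(k,x_i)\}$. In ${\sf PR}^2_i$ the $(i+1)$-st $\Ocal$-query returns $H(w)$, the oracle $H'$ used in the blocks ${\bf H}_{i+1}',\dots,{\bf H}_{q_\Ocal}'$ still has $H'(w)=H(w)$, and every later $F$-answer $H(h(k,x_j))$ is unaffected (again using~(\ref{eq:restrict_query})). In $\widetilde{\sf PR}^2_i$ the $(i+1)$-st $\Ocal$-query instead returns a fresh uniform $y'$ and $H'$ is additionally reprogrammed at $w$ to $y'$, while $H'$ stays the same everywhere else and the later $F$-answers again stay the same. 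So, conditioned on $\rho$, the two games differ only in that the oracle is reprogrammed at the single point $w$, from the base value $H(w)$ to a fresh independent value, which fresh value is simultaneously returned to $\Acal$ as the $(i+1)$-st $\Ocal$-response --- precisely the situation handled by a quantum adaptive-reprogramming lemma.

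It then remains to supply the lemma's two inputs. For the min-entropy: conditioned on $\rho$, the key $k$ is still uniform on $\Kcal$, because $\rho$ depends only on $H$ and on the first $i$ values $R(x_1),\dots,R(x_i)$, all independent of $k$; hence~(\ref{eq:high_min_entropy}) yields $\Pr_k[\,h(k,x_{i+1})=w_0\mid\rho\,]\le\epsilon$ for every $w_0$, i.e.\ $w$ has min-entropy at least $\log(1/\epsilon)$ given $\rho$. For the query count: the relevant $\Hcal$-queries are those in blocks ${\bf H}_1,\dots,{\bf H}_i$, namely $\sum_{1\le j\le i}q^\Hcal_j$ of them --- the $\Hcal$-queries in ${\bf H}_0$, made before any interaction with $\Ocal$, do not enter this count, as they are part of the initial state relative to which the min-entropy of $w$ is still $\log(1/\epsilon)$. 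The reprogramming bound (with a constant that can be taken to be $2$) now gives $2\sqrt{\sum_{1\le j\le i}q^\Hcal_j\,\epsilon}$, which is the claim.

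I expect the genuine work to be in the reprogramming step itself: unlike in the classical setting one cannot argue that ``$\Acal$ never queried $w$'', since the quantum $\Hcal$-queries answered by the un-reprogrammed $H$ may carry superposed information about $H(w)$, and the reprogrammed value does double duty --- as the $(i+1)$-st $\Ocal$-response and as the value of $H'$ in every subsequent block. The reduction to the reprogramming lemma therefore has to be arranged so that these later $\Hcal$-queries (and the $\Ocal$-response) count only as post-reprogramming accesses to the already-reprogrammed oracle, which do not enter the bound, while only the $\Hcal$-queries genuinely preceding the reprogramming do. The remaining ingredients --- the coupling up to $\rho$, the bookkeeping with~(\ref{eq:restrict_query}), and the min-entropy computation --- are routine, and the core estimate follows from a standard $\sqrt{q\,\epsilon}$-type adaptive-reprogramming statement in the quantum random-oracle model.
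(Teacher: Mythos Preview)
Your high-level approach---casting the step as a single adaptive reprogramming of the random oracle at a high-min-entropy point and invoking a $\sqrt{q\epsilon}$-type bound---is sound and is a legitimate alternative to the paper's route. The paper instead argues from first principles using Zhandry's compressed oracle: it purifies $H$, answers the pre-crucial $H$-queries via the compressed oracle, and just before answering the crucial query inserts a binary measurement checking whether $w=h(k,x)$ is already recorded in the database; the gentle-measurement lemma bounds the disturbance by $\sqrt{\Pr[\text{abort}]}$, and the abort probability is at most $(\text{number of prior $H$-queries})\cdot\epsilon$ since $k$ is still fresh. Conditioned on non-abort, $H(w)$ is a fresh uniform value and the two games coincide. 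A black-box adaptive-reprogramming lemma essentially packages this same compressed-oracle argument, so the two routes are morally equivalent---yours more modular, the paper's more self-contained.

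There is, however, a genuine gap in your query-counting step. You claim that the ${\bf H}_0$ queries ``do not enter this count, as they are part of the initial state relative to which the min-entropy of $w$ is still $\log(1/\epsilon)$''. But $k$ remains fresh---and hence $w$ has min-entropy $\log(1/\epsilon)$ given the adversary's view---throughout \emph{all} of ${\bf H}_0,\dots,{\bf H}_i$, not just ${\bf H}_0$; by your logic no pre-crucial $H$-query would count, which is plainly wrong. What the reprogramming bound tracks is not when the min-entropy of $w$ is ``established'' but how many oracle queries may have touched $H(w)$ in superposition before the reprogramming, and that is all of $q^\Hcal_0+\cdots+q^\Hcal_i$. (The paper's own proof indeed works with $t=q^\Hcal_0+\cdots+q^\Hcal_i$; the range $1\le j\le i$ in the lemma statement appears to be an indexing slip, immaterial since either sum is $\le q_\Hcal$ in the application.) Drop the ${\bf H}_0$-exclusion argument and your reduction goes through with the corrected count.
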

    \begin{lemma}\label{lem:hybrid_unreprogram}
        For each $0\leq i<q_\Ocal$,
        \begin{align*}
            \left|\Pr\Bigl[1\gets\widetilde{\sf PR}^2_i\Bigr]
            - \Pr\Bigl[1\gets{\sf PR}^2_{i+1}\Bigr]\right|
            \leq 2 q^\Hcal_{i+1} \sqrt{q_\Ocal\epsilon} \;.
        \end{align*}
    \end{lemma}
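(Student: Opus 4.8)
The plan is to isolate the unique difference between the two games and dispatch it with a reprogramming (one-way-to-hiding) argument. Comparing
\[
\widetilde{\sf PR}^2_i=\Acal^{[{\bf H}_0 R\cdots R{\bf H}_{i} R{\bf H}_{i+1}'F\cdots F{\bf H}_{q_\Ocal}']}
\qquad\text{and}\qquad
{\sf PR}^2_{i+1}=\Acal^{[{\bf H}_0 R\cdots R{\bf H}_{i} R{\bf H}_{i+1} F\cdots F{\bf H}_{q_\Ocal}']}
\]
one sees that the two experiments run verbatim the same process, with the sole exception that the $\Hcal$-block ${\bf H}_{i+1}$ is answered by the reprogrammed oracle $H'$ in the former and by the plain random oracle $H$ in the latter. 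In particular, everything up to and including the $(i{+}1)$-st $\Ocal$-query is identical — the blocks ${\bf H}_0,\dots,{\bf H}_i$ are answered by $H$, and the first $i{+}1$ $\Ocal$-queries, on inputs $x_1,\dots,x_{i+1}$ say, are answered by $R$ — and so is everything after block ${\bf H}_{i+1}$, where both games use $H'$ for $\Hcal$ and $F$ for $\Ocal$. By construction of $H'$, at the moment block ${\bf H}_{i+1}$ is executed $H'$ equals $H$ reprogrammed precisely at the points of $P:=\{h(k,x_1),\dots,h(k,x_{i+1})\}$; thus $H$ and $H'$ agree outside $P$, and $|P|\le i{+}1\le q_\Ocal$.

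Next I would phrase the transition as one application of the one-way-to-hiding (O2H) lemma in its form for reprogramming on a set of points. Let $\Dcal$ be the procedure that, given as classical side information the state of $\Acal$ just before block ${\bf H}_{i+1}$ together with the (lazily sampled) tables of $H$, $H'$, $R$ and the key $k$, runs the $q^\Hcal_{i+1}$ queries of block ${\bf H}_{i+1}$ relative to a challenge oracle and then finishes the run of $\Acal$, answering $\Hcal$ by $H'$ and $\Ocal$ by $F$ from the side information, and outputs $\Acal$'s final bit. Then $\Dcal^{H}$ reproduces ${\sf PR}^2_{i+1}$, $\Dcal^{H'}$ reproduces $\widetilde{\sf PR}^2_i$, and $\Dcal$ makes only $q^\Hcal_{i+1}$ queries to its challenge oracle. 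Hence
\[
\Bigl|\Pr\bigl[1\gets\widetilde{\sf PR}^2_i\bigr]-\Pr\bigl[1\gets{\sf PR}^2_{i+1}\bigr]\Bigr|\;\le\;2\,q^\Hcal_{i+1}\sqrt{p}\,,
\]
where $p$ is the probability that, running ${\sf PR}^2_{i+1}$ and measuring a uniformly random one of the $q^\Hcal_{i+1}$ queries in block ${\bf H}_{i+1}$, the outcome lies in $P$.

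It then remains to show $p\le q_\Ocal\,\epsilon$. The key point is that in ${\sf PR}^2_{i+1}$ the entire execution up to and including block ${\bf H}_{i+1}$ touches only the oracles $H$ and $R$, both of which are sampled independently of the key $k\gets\Kcal$; consequently the measured outcome $w$ and the inputs $x_1,\dots,x_{i+1}$ are all independent of $k$. Taking a union bound over $j$ and invoking the min-entropy condition (\ref{eq:high_min_entropy}), which for every fixed $x$ and $w$ gives $\Pr_{k}[h(k,x)=w]\le\epsilon$, we get $p=\Pr[w\in P]\le\sum_{j=1}^{i+1}\Pr[w=h(k,x_j)]\le (i{+}1)\,\epsilon\le q_\Ocal\,\epsilon$. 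Substituting this bound into the displayed inequality proves the lemma.

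I expect the main obstacle to be the bookkeeping that makes the O2H reduction airtight rather than any deep new idea. One has to check that $H'$ is never used before block ${\bf H}_{i+1}$ — this is what guarantees that the pre-block execution, hence $P$ and the query outcomes feeding into $p$, are independent of $k$, which is exactly the ingredient the union bound relies on — that the $F$-queries occurring after block ${\bf H}_{i+1}$ do not themselves reprogram $H'$, and that the identical tails of the two experiments can legitimately be absorbed into the O2H distinguisher's post-processing without inflating the number of queries to the challenge oracle.
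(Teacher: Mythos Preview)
Your proposal is correct, and conceptually it follows the same line as the paper: the two games differ only in whether block ${\bf H}_{i+1}$ is answered by $H$ or by $H'$, these two oracles agree outside the small set $P=\{h(k,x_j)\}_{j\le i+1}$, and hitting $P$ is unlikely because $k$ is still independent of everything $\Acal$ has produced up to that point. The packaging, however, is different. The paper does an explicit query-by-query hybrid ${\bf G}_{i,0},\dots,{\bf G}_{i,m}$ across block ${\bf H}_{i+1}$, switching one query at a time from $H'$ to $H$; at each step it applies the gentle-measurement lemma to the binary measurement ``is this query in $S$?'', picking up $2\sqrt{q_\Ocal\epsilon}$ per step and summing to $2q^\Hcal_{i+1}\sqrt{q_\Ocal\epsilon}$. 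You instead invoke the O2H lemma once on the whole block, which encapsulates the same hybrid internally and hands you $2q^\Hcal_{i+1}\sqrt{p}$ directly. Your route is shorter and black-box; the paper's is more self-contained and avoids importing O2H as a prerequisite. Both yield the identical bound, and the bookkeeping points you flag are exactly the right ones---in particular, that $H'$ is reprogrammed only on $R$-queries, all of which occur before block ${\bf H}_{i+1}$, so the tails of the two experiments are genuinely identical and can be absorbed into the O2H distinguisher without adding challenge-oracle queries.
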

    
    Indeed, by repeated applications of these lemmas, and additionally using that $q^\Hcal_0  + \cdots + q^\Hcal_i  \leq q_\Hcal$ for all $0 \leq i \leq q_\Ocal$, we obtain 
    \begin{align*}
    \left|\Pr\left[1\gets{\sf PR}^1\right] - \Pr\left[1\gets{\sf PR}^0\right]\right|
    &\leq  2 \sum_{i=0}^{q_\Ocal}  \sqrt{\sum_{1\leq j \leq i}q^\Hcal_j \epsilon} +  2 \sum_{i=0}^{q_\Ocal} q^\Hcal_{i+1} \sqrt{q_\Ocal\epsilon} \\[0.5ex]
    &\leq 2\sqrt{q_\Ocal^2q_\Hcal\epsilon} + 2\sqrt{q_\Hcal^2q_\Ocal\epsilon}
    \end{align*}
%    Indeed, given these statements, it then follows from Lemma~\ref{lem:hybrid_F2R} that
%    $$
%    \Pr\Bigl[1\gets{\sf PR}^2_i\Bigr]
%    \leq \Pr\Bigl[1\gets\widetilde{\sf PR}^2_i\Bigr] + 2\sqrt{\sum_{1\leq k\leq i}q^\Hcal_i\epsilon}
%    \leq \Pr\Bigl[1\gets{\sf PR}^2_{i+1}\Bigr] + 2\sqrt{q_\Hcal\epsilon}\;.
%    $$
%    Putting the bounds above for $0\leq i< q_\Ocal$ together with Lemma~\ref{lem:hybrid_unreprogram}, we obtain
%    \begin{align*}
%        \Pr\left[1\gets{\sf PR}^{1}\right]
%        &= \Pr\left[1\gets{\sf PR}^{2}_0\right]\\
%        &\leq \Pr\left[1\gets{\sf PR}^{2}_{q_\Ocal}\right]
%        + \sum_{0\leq i\leq q_\Ocal}\left(2\sqrt{q_\Hcal\epsilon}+2q^\Hcal_{i+1}\sqrt{q_\Ocal\epsilon}\right)\\
%        &\leq \Pr\left[1\gets{\sf PR}^{2}_{q_\Ocal}\right]
%        + 2\sqrt{q_\Ocal^2q_\Hcal\epsilon} + 2\sqrt{q_\Hcal^2q_\Ocal\epsilon}\;,\\
%    \end{align*}
%    where the last inequality is due to $\sum_{0\leq i<q_\Ocal} q^\Hcal_{i+1}\leq q_\Hcal$.
%    This bound can be re-written in an absolute-value form, by negating the output of $\Acal$,
%    $$
%    \left|\Pr\left[1\gets{\sf PR}^1\right]
%    - \Pr\left[1\gets{\sf PR}^0\right]\right|
%    \leq 2\sqrt{q_\Ocal^2q_\Hcal\epsilon} + 2\sqrt{q_\Hcal^2q_\Ocal\epsilon} \;,
%    $$
    which concludes the claim of Theorem~\ref{thm:pseudorandom_adaptive} when incorporating the factor-2 increase in $q_\Hcal$ and $q_\Ocal$ due to switching to a static $\Acal$. 
    \end{proof}

It remains to prove Lemma~\ref{lem:hybrid_F2R} and~\ref{lem:hybrid_unreprogram}, which we do below. 
In both proofs, we use the {\em gentle measurement lemma} \cite[Lemma~9.4.1]{wilde2011classical}, which states that if a projective measurement has a very likely outcome then the measurement causes only little disturbance on the state. More formally, for any density operator $\rho$ and any projector $P$, where $p := \tr(P \rho P)$ then is the probability to observe the outcome associated with $P$ when measured using the measurement $\{P, \mathbb{I}- P \}$, the trace distance between the original state $\rho$ and the post-measurement state $\rho' := P \rho P/p$ is bounded by $\sqrt{1-p}$. This in turn implies that $\rho$ and $\rho'$ can be distinguished with an advantage $\sqrt{1-p}$ only. 
    
The proof of Lemma~\ref{lem:hybrid_F2R} additionally makes use of Zhandry's compressed oracle technique \cite{Zha19}. It is out of scope of this work to give a self-contained description of this technique; we refer to the original work~\cite{Zha19} instead, or to \cite{CFHL21}, which offers an alternative concise description. At the core is the observation that one can {\em purify} the random choice of the function $H$ and then, by switching to the Fourier basis and doing a suitable measurement, one can check whether a certain input $x$ has been ``recorded'' in the database (mind though that such a measurement disturbs the state). If the outcome is negative then the oracle is still in a uniform superposition over all possible hash values for $x$, and as a consequence, when removing the purification by doing a full measurement of $H$ (in the computational basis), $H(x)$ is ensured to be a ``fresh'' uniformly random value, with no information on $H(x)$ having been leaked in prior queries. 

In the proof of Lemma~\ref{lem:hybrid_F2R}, we use this technique to check whether {\em prior} to the crucial query, which is to $F$ in one and to $R$ in the other game, there was a query to $H$ that would reveal the difference, and we use (\ref{eq:high_min_entropy}) to argue that it is unlikely that such a query occurred. Since this measurement has a likely outcome, it is also ensured by the gentle measurement lemma that this measurement causes little disturbance.   
    
%    we will be using Zhandry's compressed oracle technique \cite{CFHL21,Zha19} as well as the gentle measurement lemma \cite[Lemma~9.4.3]{wilde2011classical}. The latter serve as a well-known result when showing quantum indistionguishability. For our application, consider a {\em mixed state} $\rho$ and a binary measurement ${\cal M}$ specified by a projector ${\sf P}$ such that
%    $\tr({\sf P}\rho{\sf P})\geq 1-p$,
%    for some small $p>0$, i.e. the measurement outcome is positive except with small probability $p$. Then the trace distance between the disturbed state and the original state is bounded by
%    $$
%    T(\rho,P\rho P)\leq \sqrt{p}\;.
%    $$
%    Let $\Dcal$ be an unbounded quantum algorithm that eventually produces an output bit, and abort when it is pre-composed by the measurement ${\cal M}$ producing negative outcome. Then, it follows that
%    \begin{align}\label{eq:gentle_measurement}
%        \left|\Pr\left[1\gets\Dcal({\cal M}(\rho)) \right]
%        - \Pr\left[1\gets\Dcal(\rho)\right]\right|\leq \sqrt{p}\;.
%    \end{align}
%    This bound comes in handy for later hybrid. From now on we refer to this as the gentle measurement bound.

    \begin{proof}[Proof (of Lemma~\ref{lem:hybrid_F2R})]
%        We are going to show the closeness between the games ${\sf PR}^b$ for $b\in\{0,1\}$. 
        For convenience, we refer to the {\em crucial query} as the respective query to $F$ and $R$ that differs between
$$
        {\sf PR}^2_{i}
        =\Acal^{[{\bf H}_0 R\dots R{\bf H}_{i}{\color{red} F}{\bf H}_{i+1}'F\dots F{\bf H}_{q_\Ocal}']} \quad\text{and}\quad \widetilde{\sf PR}^2_i = \Acal^{[{\bf H}_0 R\dots R{\bf H}_{i}{\color{red} R}{\bf H}_{i+1}'F\dots F{\bf H}_{q_\Ocal}']} \, .
        $$
        Furthermore, we let $x$ be the input to that query, and we set $w := h(k,x)$, with $k$ being the key chosen and used by $F$. Note that up to this very query, the two games are identical. Also, by (\ref{eq:restrict_query}) it is ensured that for any prior query $x'$ to $R$ it holds that $h(k,x') \neq w$. 

        First, we consider the games ${\bf G}^{1}$ and $\widetilde{\bf G}^1$ that work exactly as ${\sf PR}^2_i$ and $\widetilde{\sf PR}^2_i$, respectively, except that, at the beginning of the games we set up the compressed oracle and answer all queries made to $H$ prior to the crucial query using the compressed oracle. Then, once $x$ is received during the crucial query, we do a full measurement of the purified (i.e.~uncompressed) oracle in order to obtain the function $H$, which is then to be used in the remainder of the games.  
        We note that setting up the function $H'$ is then necessarily also deferred to after this measurement, where $H'$ is then set to be equal to $H$, except that for any prior query $x'$ to $R$ it is reprogrammed to $H'(h(k,x')) := R(x')$. 
        Only once $H$ has been measured and $H'$ set up as above, is the crucial query then actually answered. 
        
        It follows from basic properties of the compressed oracle that the respective output distributions of ${\bf G}^{1}$ and $\widetilde{\bf G}^1$ match with those of ${\sf PR}^2_i$ and \smash{$\widetilde{\sf PR}^2_i$}. 
        
        Then, we define ${\bf G}^{2}$ and $\widetilde{\bf G}^2$ from ${\bf G}^{1}$ and $\widetilde{\bf G}^1$, respectively, by introducing one more measurement. Namely, right after $x$ is sent by $\Acal$ and before $H$ is measured, we measure in the compressed oracle whether the input $w = h(k,x)$ has been recorded in the database, 
%        This measurement is performed immediately after the crucial query, but before the answer is computed, 
        and in case of a positive outcome, the game aborts. By the gentle measurement lemma (and basic properties of the trace distance), 
        \begin{align*}
            &\left|\Pr\left[1\gets{\bf G}^{1}\right]
            - \Pr\left[1\gets {\bf G}^{2}\right]\right|
            \leq \sqrt{\Pr\left[{\bf G}^{2}\text{ aborts}\right]} %\\
%            &\left|\Pr\left[1\gets\widetilde{\bf G}^{1}\right]
%            - \Pr\left[1\gets \widetilde{\bf G}^{2}\right]\right|
%            \leq \sqrt{\Pr\left[\widetilde{\bf G}^{2}\text{ aborts}\right]}\;.
        \end{align*}
        and similarly for $\widetilde{\bf G}^{1}$ and $\widetilde{\bf G}^{2}$, where $\widetilde{\bf G}^{2}$ aborts with the same probability as ${\bf G}^{2}$. 
        
        By basic properties, after $t:= q^\Hcal_0 + \cdots + q^\Hcal_i$ queries to the compressed oracle, no more than $t$ values have been recorded. I.e., if we were to measure, for the sake of the argument, the entire compressed oracle to obtain the full database $D$, it would hold that $\mathrm{supp}(D) := \{ u \,|\, D(u) \!\neq\! \bot \}$ has cardinality at most $t$. % .. $D(u) = \bot$ for all but $t$ choises of $u$. 
        Since $k$ has not been used yet and so is still freshly random (i.e., independent of $x$ and $D$), the high-entropy condition (\ref{eq:high_min_entropy}) then ensures that

%        Define the database $D$ in each of ${\bf G}^2$ and $\widetilde{\bf G}^2$, obtained by hypothetically measuring the compressed oracle right before this additional measurement taking place. Note that by definition ${\bf G}^2$ and ${\bf G}^{-2}$ behaves the same prior to the crucial query, and thus they have the same abort probability. 
%        \serge{I've replaced the following, not spelling out the precise derivation (as this is standard and straightforward). :}, 
%  Since $D$ is independent of $k$ (exploiting that no queries to $F$ were made yet), and thus to $w: = h(k,x)$, and since $D$ is defined at at most $\sum_{j<i}q^\Hcal_j$ points (the number of prior queries), and by exploiting condition (\ref{eq:high_min_entropy}), we have
        $$
        \Pr\bigl[\widetilde{\bf G}^2\text{ abort}\bigr]=\Pr\bigl[{\bf G}^{2}\text{ abort}\bigr] = \Pr\bigl[ w \in \mathrm{supp}(D) \bigr] \leq \sum_{j<i}q^\Hcal_j\epsilon \, .
        $$
   
%        Since $D$ and $w:=h(k,x)$ are independent as random variables, we have
%        \begin{align*}
%            &\Pr\left[\widetilde{\bf G}^2\text{ abort}\right]=\Pr\left[{\bf G}^{2}\text{ abort}\right]
%            \leq \Pr\left[D(w)\neq\bot\right]\\
%            &\leq\sum_{D_0}\Pr\left[D_0(w)\neq\bot\middle|D=D_0\right]
%            \cdot\Pr\left[D=D_0\right]\\
%            &\leq\sum_{D_0} \sum_{j<i}|\mathbfcal{H}_j|\epsilon\cdot\Pr\left[D=D_0\right]
%            \leq \sum_{j<i}|\mathbfcal{H}_j|\epsilon\;,
%        \end{align*}
%        where the fourth (in)equality is because the compressed oracle has been queried for no more than $\sum_{j<i}|\mathbfcal{H}_j|$ times prior to the crucial query, and thus the size of the database is bounded by $\#\{w\in\Wcal|D(w)\neq\bot\}\leq \sum_{j<i}|\mathbfcal{H}_j|$. 
        
        It remains to show that ${\bf G}^2$ and $\widetilde{\bf G}^2$ behave identically conditioned on not aborting. 
%        We note that conditioned on them not having aborted, the compressed oracle at the register $h(k,x)$ is $\bot$, and so when uncompressing and measuring to obtain $H$, the hash value $H(w)$ will be a fresh random value.  
        The only difference between the two games is that in ${\bf G}^2$ the crucial query is answered with $y := H(h(k,x)) = H(w)$ and $H'$ is {\em not} reprogrammed at the point $w$, while in $\widetilde{\bf G}^2$ the crucial query is answered with $y := R(x)$ and $H'$ {\em is} reprogrammed at the point $w$ to $H'(w) := R(x)$. 
        We argue that this difference is not noticable by $\Acal$. 
        
        First, we note that $y$ is a fresh random value in both games. In the former game it is because, conditioned on not aborting, the compressed oracle at the register $h(k,x)$ is $\bot$, and so when uncompressing and measuring to obtain $H$, the hash value $H(w)$ will be a fresh random value.  In the latter game it is because $R(x)$ is a truly random function and, due to (\ref{eq:restrict_query}), $x$ has not been queried to $R$ before. 
        
        Second, we observe that $y = H'(w)$ in both games. Indeed, in $\widetilde{\bf G}^2$ this holds by definition; in ${\bf G}^2$ it holds because $H'(w) = H(w)$, which follows from the fact that $H'$ is reprogrammed only at points $w' = h(k,x')$ with $x'$ being a prior query to $R$, but then (\ref{eq:restrict_query}) ensures that $w' \neq w$. 
        
        Thus, in both games, from $\Acal$'s perspective, the tuple $(k,y,H',H\backslash w)$ of random variables has the same distribution, where $H\backslash w$ refers to the function (table of) $H$ but with the value at the point $w$ removed. The only difference is that in one game $H'(w) = H(w)$ and in the other not (necessarily). However, the future behavior of $\Acal$ in both games only depends on $(k,y,H',H\backslash w)$, and thus $\Acal$ behaves the same way in both games. 
        Here we are exploiting that the future hash queries by $\Acal$ are to $H'$ (and not to $H$ anymore), and, once more, we are using the restriction (\ref{eq:restrict_query}), here to ensure that for any future $F$-query $x'$ by $\Acal$, it holds that $h(k,x') \neq w$, and thus the response does not depend on $H(w)$. Thus, $H(w)$ does indeed not affect $\Acal$'s behavior after the crucial query.

Exploiting that  ${\sf PR}^2_i = {\bf G}^{1} \approx {\bf G}^{2} = \widetilde{\bf G}^2 \approx \widetilde{\bf G}^1 = \widetilde{\sf PR}^2_i$, with the approximations bounded as discussed further up, we obtain the claimed closeness claim.          
%        Putting things together, we obtain
%        \begin{align*}
%            &\Pr\left[1\gets {\sf PR}^{2}_i\right]
%            = \Pr\left[1\gets {\bf G}^{1}\right]
%            \leq \Pr\left[1\gets {\bf G}^{2}\right] + \sqrt{\sum_{j<i}q^\Hcal_j\epsilon}\\
%            &= \Pr\left[1\gets {\bf G}^{-2}\right] + \sqrt{\sum_{j<i}q^\Hcal_j\epsilon}
%            \leq \Pr\left[1\gets {\bf G}^{-1}\right] + 2\sqrt{\sum_{j<i}q^\Hcal_j\epsilon}\\
%            &=\Pr\left[1\gets \widetilde{\sf PR}^{2}_i\right] + 2\sqrt{\sum_{j<i}q^\Hcal_j\epsilon}\;.
%        \end{align*}
        This concludes the proof.
    \end{proof}
    
    \begin{proof}[Proof of Lemma~\ref{lem:hybrid_unreprogram}]
        In order to show the closeness between $\widetilde{\sf PR}^2_i$ and ${\sf PR}^2_{i+1}$, we define the intermediate games 
        \begin{align*}
            {\bf G}_{i,j}:=\Acal^{[{\bf H}_0 R\dots {\bf H}_i R {\color{blue}{\bf H}_{i,j}' {\bf H}_{i,j}}F\dots F{\bf H}'_{q_\Ocal}]} 
        \end{align*}
         for $0\leq j\leq m:=q^\Hcal_{i+1}$, where ${\color{blue}{\bf H}_{i,j}'}$ and ${\color{blue}{\bf H}_{i,j}}$ consists of $j$ and $m-j$ copies of $H'$ and $H$ respectively. Note that for the extrame cases we have 
        $$
        {\bf G}_{i,0}=\widetilde{\sf PR}^2_i \qquad\text{and}\qquad {\bf G}_{i,m}={\sf PR}^2_{i+1} \, .
        $$
        Thus, it suffices to show closeness between ${\bf G}_{i,j}$ and ${\bf G}_{i,j+1}$ for any $0\leq j<m$. Note that they only differ at one query, which is either to $H'$ or to $H$, which we will refer to as the {\em crucial query} for convenience. In the remainder, $i$ and $j$ are arbitrary (in the considered ranges) but fixed. 

        Define the games $\widetilde{\bf G}^{1}$ and ${\bf G}^1$ from ${\bf G}_{i,j}$ and ${\bf G}_{i,j+1}$ respectively as follows.
        Let $X$ be the set of queries $x$ made to $R$ prior to the crucial query, and set $S := \{ h(k,x) \,|\, x \!\in\! X \}$. We then measure the crucial query, which may be in a superposition, with the binary measurement that checks whether the crucial query is an element of $S$, and we abort if this is the case. 
        
        In case of a negative outcome, i.e., the crucial query is {\em not} in $S$, there is no difference between the reply provided by $H$ and by $H'$, and thus there is no difference between the two games\,---\,and in case of a positive outcome, they both abort. 
        In order to argue that this measurement causes little disturbance, we again use the gentle measurement lemma to argue that 
                \begin{align*}
%            &\left|\Pr\left[1\gets \widetilde{\bf G}^{1}\right]
%            - \Pr\left[1\gets{\bf G}_{i,j}\right]\right|
%            \leq \sqrt{\Pr\left[\widetilde{\bf G}^{1}\text{ abort}\right]}\\
            &\left|\Pr\left[1\gets {\bf G}^{1}\right]
            - \Pr\left[1\gets{\bf G}_{i,j+1}\right]\right|
            \leq \sqrt{\Pr\left[{\bf G}^{1}\text{ abort}\right]}\;,
        \end{align*}
        and correspondingly for ${\bf G}_{i,j}$ and $\widetilde{\bf G}^{1}$. 
%        The two games $\widetilde{\bf G}^1,{\bf G}^{1}$ differs at the crucial query $H'/H$ only when the queried input falls within $S$, but at such case both games will abort. It follows immediately that the two games behaves identically. Furthermore, by gentle measurement bound (\ref{eq:gentle_measurement}), we obtain
%        \begin{align*}
%            &\left|\Pr\left[1\gets \widetilde{\bf G}^{1}\right]
%            - \Pr\left[1\gets{\bf G}_{i,j}\right]\right|
%            \leq \sqrt{\Pr\left[\widetilde{\bf G}^{1}\text{ abort}\right]}\\
%            &\left|\Pr\left[1\gets {\bf G}^{1}\right]
%            - \Pr\left[1\gets{\bf G}_{i,j+1}\right]\right|
%            \leq \sqrt{\Pr\left[{\bf G}^{1}\text{ abort}\right]}\;,
%        \end{align*}
        So it remains to bound the abort probability. 
        For the purpose of the argument, let us do a full measurement of the query, and let $w$ be the outcome. We note that $k$ has not been used yet, and thus remains a fresh random key, independent of $w$ and $X$. Thus, using (\ref{eq:high_min_entropy}),
        $$
                \Pr\left[{\bf G}^1\text{ abort}\right]=\Pr\left[\widetilde{\bf G}^{1}\text{ abort}\right]
        =\Pr\left[w \in S \right]
        \leq \sum_{x \in X} \Pr\left[ w = h(k,x) \right]
        \leq q_\Ocal\epsilon \;.
        $$
        Adding up this error term over the sequence ${\bf G}_{i,0}\approx\dots\approx{\bf G}_{i,m}$ of approximations, the proof is concluded.
%        
%        Let $x$ be the random variable obtained by hypothetically measuring the designated input register right before the crucial query, and $S=\{s_1,\dots,s_{q_R}\}$ where $s_{i}$ is input of the $i$-th query to $\Rcal$.
%        Then, by the independece of $S$ and $x$, we have
%        $$
%        \Pr\left[{\bf G}^1\text{ abort}\right]=\Pr\left[\widetilde{\bf G}^{1}\text{ abort}\right]
%        =\Pr\left[x\in S\right]
%        \leq \sum_{1\leq i\leq q_R} \Pr\left[s_i=x\right]
%        \leq q_\Ocal\epsilon\;,
%        $$
%        where the last inequality is due to (\ref{eq:high_min_entropy}) as well as the independence of $S$ and $x$. Then we have,
%        \begin{align*}
%            &\Pr\left[1\gets {\bf G}_{i,j}\right]
%            \leq \Pr\left[1\gets \widetilde{\bf G}^{1}\right] + \sqrt{q_\Ocal\epsilon}\\
%            &= \Pr\left[1\gets {\bf G}^{1}\right] + \sqrt{q_\Ocal\epsilon}
%            \leq \Pr\left[1\gets {\bf G}_{i,j+1}\right] + 2\sqrt{q_\Ocal\epsilon}\;.
%        \end{align*}
%        Rewriting this bound in the absolute-value form by negating the output of $\Acal$, and putting things together for ${\bf G}_{i,0}\approx\dots\approx{\bf G}_{i,m}$, the proof is concluded.
    \end{proof}

    \section*{Acknowledgments}
    
    JD was funded by the ERC-ADG project ALGSTRONGCRYPTO (project number 740972). YHH was funded by the Dutch Research Agenda (NWA) project HAPKIDO (project number NWA.1215.18.002), which is financed by the Dutch Research Council (NWO).

    \ifnum\submission=0
        \bibliographystyle{alpha}
    \else
        \bibliographystyle{splncs04}
    \fi
    \bibliography{ref}

\end{document}